\newcommand{\uup}{{\sc uf{\scriptsize 1}}\xspace}
\newcommand{\ro}{{\sc ro}\xspace}
\newcommand{\iro}{{\sc iro}\xspace}
\newcommand{\idl}{{\sc idl}\xspace}
\newcommand{\rdl}{{\sc rdl}\xspace}
\newcommand{\irdluup}{{\sc uf{\scriptsize 1}$\cdot$irdl}\xspace}
\newcommand{\rdluup}{{\sc uf{\scriptsize 1}$\cdot$rdl}\xspace}
\newcommand{\lmix}{{\sc uf{\scriptsize 1}$\cdot$idl$\cdot$iro}\xspace}
\newcommand{\msoor}{{\sc uf{\scriptsize 1}$\cdot$ro}\xspace}
\newcommand{\msoiro}{{\sc uf{\scriptsize 1}$\cdot$iro}\xspace}
\newcommand{\ex}{\vspace{2pt}\textit{Example:}\ }
\newcommand{\pz}{P_{\textit{int}}\xspace}
\newcommand{\paxiom}{\textit{AXIOMS}_\textit{Supp}\xspace}
\newcommand{\startformula}{\textit{START}_\mathcal{M}\xspace}
\newcommand{\stepformula}{\textit{STEP}_\mathcal{M}\xspace}
\newcommand{\terminationformula}{\textit{END}_\mathcal{M}\xspace}
\newcommand{\haltformula}{\textit{HALT}_\mathcal{M}\xspace}
\newcommand{\rel}{\textit{Support}\xspace}
\newcommand{\onesucc}{\textit{Succ}_\textit{Supp}\xspace}
\newcommand{\onepred}{\textit{Pred}_\textit{Supp}\xspace}
\newcommand{\encodbeg}{\textit{EncodingBegins}\xspace}
\newcommand{\leftcvg}{\textit{NoPred}_\textit{Supp}\xspace}
\newcommand{\rsucc}{\textit{HasSucc}_\textit{Supp}\xspace}
\newcommand{\state}[1]{\textit{State}_{#1}\xspace}
\title{Decidability of Difference Logic over the Reals with Uninterpreted Unary Predicates}
\author{Bernard~Boigelot\orcidID{0009-0009-4721-3824} \and Pascal~Fontaine\orcidID{0000-0003-4700-6031} \and Baptiste~Vergain\orcidID{0009-0003-5545-4579}}
\institute{Montefiore Institute, B28\\
	Université de Liège, Belgium\\
	\email{\{Bernard.Boigelot,\,Pascal.Fontaine,\,BVergain\}@uliege.be}}
\begin{document}
\maketitle

\begin{abstract}
	First-order logic fragments mixing quantifiers, arithmetic, and uninterpreted predicates are often undecidable, as is, for instance, Presburger arithmetic extended with a single uninterpreted unary predicate.  In the SMT world, difference logic is a quite popular fragment of linear arithmetic which is less expressive than Presburger arithmetic.  Difference logic on integers with uninterpreted unary predicates is known to be decidable, even in the presence of quantifiers.  We here show that (quantified) difference logic on real numbers with a single uninterpreted unary predicate is undecidable, quite surprisingly. Moreover, we prove that difference logic on integers, together with order on reals, combined with uninterpreted unary predicates, remains decidable. 

	\keywords{First-order logic \and Decidability \and SMT \and Arithmetic \and Uninterpreted predicates}

\end{abstract}

\section{Introduction}
\label{sec:intro}

The success of satisfiability modulo theories (SMT) solvers in verification can be attributed to several things, but one of them is indisputably the omnipresence, in the combination of theories, of arithmetic reasoners.  As SMT solvers get stronger in quantified reasoning, it becomes more interesting to get a clear picture of decidability frontiers when arithmetic is used in a quantified SMT context.  Some pure arithmetic theories are already undecidable, even in their quantifier-free fragment, e.g., Peano arithmetic~\cite{Matiyasevich93}, i.e., a first-order theory of the natural numbers with addition and multiplication.  However, Presburger arithmetic, somehow the linear restriction of Peano arithmetic, is decidable even in the quantified case~\cite{haase2018survival}, but augmenting Presburger arithmetic with a single unary uninterpreted predicate already yields undecidability~\cite{downey1972undecidability,Halpern1,Speranski1}.  To obtain a decidable fragment mixing arithmetic and uninterpreted predicates, one must further restrict the expressiveness.

In the SMT world, difference logic used to be a popular fragment of arithmetic,
because of its low complexity in the quantifier-free case.  In this fragment,
arithmetic is limited to difference constraints of the form $x - y \bowtie c$
where $x$ and $y$ are variables, $c$ is an integer constant and $\bowtie$
belongs to $\{<, \leq, =, \geq, >\}$.  Difference constraints can, e.g., express conditions on
the distance between two variables, the atomic formula $x - y = 2$ stating that
the distance between the values of $x$ and $y$ must be exactly $2$.  Notice that
since difference constraints involve only two variables ($c$ is an integer
constant) those constraints are strictly less expressive than linear
constraints in Presburger arithmetic.  The decidability of the logic mixing
difference constraints and unary uninterpreted predicates, when interpreted over
$\mathbb{N}$ (or similarly $\mathbb{Z}$) reduces to the decidability of the
monadic second-order theory of one successor, usually referred to as S1S.  The
decidability of S1S has been established thanks to the concept of infinite-word
automaton~\cite{buchi1962decision}.

On the real domain, it is well known that the first-order theory of real-closed
fields, which is in a sense the real counterpart of Peano arithmetic, is
decidable~\cite{Tarski1} even in the presence of quantifiers.  Whereas this
might give the impression that decidability is more often obtained on the reals
than on the integers, we here prove that the logic mixing difference constraints
and unary uninterpreted predicates, when interpreted over $\mathbb{R}$, is
undecidable.

Further restricting the arithmetic language, and considering order on the real
domain only, it is known that the monadic second-order theory of order is
undecidable~\cite{gurevich1982monadic,shelah1975monadic}, but its universal
fragment is decidable~\cite{burgess1985decision}. In this work, we establish that
the fragment mixing unary uninterpreted predicates, difference constraints over
integer variables, and order constraints over real variables is decidable.

Section~\ref{sec:prelim} provides some prerequisites and the precise definition of the studied fragments.  In Section~\ref{sec:lmix}, we prove the decidability of the fragment mixing unary uninterpreted predicates, difference constraints over integer variables, and order constraints over real variables.  This was already the subject of a work-in-progress workshop paper~\cite{vergainSCSC2022}.  In Section~\ref{sec:undec}, we prove that the fragment of quantified difference constraints over real variables extended with a single unary uninterpreted predicate is undecidable.

\section{Preliminaries}
\label{sec:prelim}

We refer to e.g., \cite{enderton2001mathematical} for a general
introduction to first-order logic with equality, and assume that the
reader is familiar with the notions of signature, term, variable, and
formula.  We use the usual logical connectives ($\lor$, $\land$,
$\lnot$, $\Rightarrow$, $\Leftrightarrow$) and first-order quantification
$\exists x.\, \varphi$ and $\forall x.\, \varphi$, respectively equivalent to writing $\exists x\, (\varphi)$ and $\forall x\, (\varphi)$, i.e., the dot stands for an opening parenthesis that is closed at the end of the formula.
Variable symbols are denoted by $x$, $y$, $z$,\ \dots \ and are meant to be interpreted as real numbers.

Our signature contains the interpreted arithmetic symbols $0$, $1$, $+$, $-$, $<$,
$\leq$, $\geq$, $>$, $=$, and other constants in $\mathbb{N}$ that stand for terms $1 + 1 +
\cdots + 1$.  We furthermore use a monadic (i.e., unary) interpreted predicate $x \in\mathbb{Z}$ to
denote that $x$ has an integer value.  The signature also contains
uninterpreted predicate symbols $P$, $Q$,\ \ldots\ In the whole article, we only
consider unary predicate symbols. Indeed, including binary uninterpreted predicates without restriction on first-order quantification directly yields undecidability. Our language is the set of all well-formed
formulas, in the usual sense, built using symbols from the signature.  Further specific restrictions will be introduced later.

An interpretation specifies a domain (i.e.,\ a set of elements), assigns a value
in the domain to each free variable, and assigns relations of appropriate arity
on the domain to predicate symbols in the signature.  Throughout the article,
the interpretation domain is always $\mathbb{R}$.  The arithmetic symbols $0$,
$1$, $+$, $-$, $<$, $\leq$, $\geq$, $>$, $=$ are interpreted as expected on
$\mathbb{R}$, and $x\in\mathbb{Z}$ is true if and only if $x$ has an integer
value\footnote[2]{In the current context, this choice of notation for mixed
	integer-real arithmetic is simpler than using a multi-sorted logic.}.  An
interpretation assigns an arbitrary subset of the domain $\mathbb{R}$ to each
unary predicate.  By extension, an interpretation assigns a value in
$\mathbb{R}$ to every term, and a truth value to every formula. We denote the
interpretation $I$ of a variable $x$ by $I[x]$, and the interpretation of a
predicate $P$ by $I[P]$. A model of a formula is an interpretation that assigns
true to this formula.  A formula is satisfiable on a domain (here $\mathbb{R}$)
if it has a model on that domain.

\subsection{Difference Arithmetic with Unary Predicates}

We consider several fragments where the language is
restricted, in particular in the way that the arithmetic relations can be used.  A fragment is
decidable if there exists a procedure to check whether a given formula in
this fragment is satisfiable.

In the various fragments introduced below, all arithmetic atoms are either
\textit{order constraints} of the form $x\!\bowtie\!y$, or \textit{difference
  constraints} of the form $x - y \bowtie c$, where $x$ and $y$ are variables,
$c$ is a constant in $\mathbb{Z}$, and $\bowtie \ \in \{<, \leq, =, \geq, >\}$.
As a reminder, the language of our formulas only contains \textit{unary	predicates}. 
The only atoms besides the arithmetic ones are of the form
$P(x)$ where $P$ is an uninterpreted predicate symbol and $x$ is a variable, and
$x \in \mathbb{Z}$ where $x$ is a variable. 
Note that the addition of constraints of the form $x \bowtie c$, where $x$ is a variable and $c$ is an integer constant, to fragments that already admit difference constraints does not increase their expressive power: a constraint $x \bowtie c$ is equivalent to the difference constraint $x-v_0 \bowtie c$, where $v_0$ is a particular variable intended to be interpreted as zero.
It is worth mentioning that the models of the formulas in our fragments are invariant by an integer shift, i.e., mapping every real $x$ to the real $x + j$, where $j$ is an integer.
Conjunctions of order constraints will be merged to improve readability, i.e., we will often write $x<y<z$ rather than $x<y \land y<z$. 
Finally, we use the shorthand $P(x+c)$ instead of $\exists y.\, y - x = c \land P(y)$, where $x$ is a free variable and $c\in \mathbb{Z}$.

We now introduce our fragments of interest. Their names are inspired from the SMT-LIB nomenclature, where acronyms stand for the theories that appear in the combinations:
\begin{itemize}
	\item \uup: the theory of uninterpreted functions, with the restriction that uninterpreted symbols may only correspond to monadic predicates;
	\item \ro: the theory of order on the reals only;
	\item \iro: the theory of order on the reals and integers;
	\item \idl: difference logic on the integers;
	\item \rdl: difference logic on the reals.
\end{itemize}

\paragraph{{\bf\msoor}.}
The fragment \msoor is the fragment with unary uninterpreted predicates and
order constraints between variables interpreted over $\mathbb{R}$.
Difference logic constraints and atoms of the form $x\in\mathbb{Z}$ are not
allowed.

\ex The formula	$\forall x\,\exists\,y,z\,.\, y<x<z \land \forall t\,.\, (y<t<z \land P(t) ) \Rightarrow t= x$ describes a predicate $P$ that is true only on isolated real numbers.

\paragraph{{\bf\msoiro}.}
The fragment \msoiro is the extension of \msoor where atoms of the form
$x\in\mathbb{Z}$ are allowed.  This fragment can express order relations between
real and integer variables.

\ex The formula $\forall x, y.\, (x<y \land x \in \mathbb{Z} \land y \in \mathbb{Z}) \Rightarrow \exists v.\, x<v<y \land P(v)$ describes a predicate $P$ that is true for at least one value located between any two integers.

\paragraph{{\bf\lmix}.}
The fragment \lmix is an extension of the fragment \msoiro (and therefore of
\msoor). It is also interpreted over $\mathbb{R}$.  Order constraints between
variables and atoms of the form $x\in\mathbb{Z}$ are allowed. Additionally,
difference logic constraints are allowed, but they can only involve \textit{integer-guarded} variables.

In order to enforce this integer-guard restriction on difference logic
constraints, \lmix formulas must be \textit{well-guarded}, i.e.,
difference logic constraints can only appear in the two following contexts:
\begin{itemize}
	\item $x \in \mathbb{Z} \land y \in \mathbb{Z} \land x - y \bowtie c$, 
	\item $(x \in \mathbb{Z} \land y \in \mathbb{Z}) \Rightarrow x - y \bowtie c$,
\end{itemize} 
where $x$ and $y$ are variables, $c \in \mathbb{Z}$ is a constant, and $\bowtie \ \in \{<, \leq, =, \geq, >\}$.

\ex The following formula describes a predicate that is either true on all odd numbers and false on all even numbers, or the opposite, as well as true on all non-integer numbers:\\
$\mbox{\hspace{20pt}} \big[ \forall x,y.\, \big(x \in \mathbb{Z} \land y \in \mathbb{Z} \land y-x=2 \big) \Rightarrow \big( P(x) \Leftrightarrow P(y)\big) \big]\\
\mbox{\hspace{20pt}} \land \big[\exists x,y.\, x \in \mathbb{Z} \land y \in \mathbb{Z} \land P(x) \land \lnot P(y)\big] \land \big[\forall z.\, \lnot(z \in \mathbb{Z}) \Rightarrow P(z)\big]$

\paragraph{{\bf\rdluup}.}
The fragment \rdluup is the fragment interpreted over $\mathbb{R}$, where order
constraints, difference logic constraints and unary predicate atoms are allowed
without any restriction. The use of atoms of the form $x\in\mathbb{Z}$ is
forbidden. Since order constraints are a special case of difference logic constraints, the name of the fragment only refers to \rdl and not \ro.

\ex The formula $\forall x\, \exists y.\, 0 < y-x < 3 \land P(y)$ describes a predicate $P$ such that any subinterval of $\mathbb{R}$ of length greater or equal to $3$ contains a value for which $P$ is true.

\vspace{3pt}\textit{Note:}\ It might appear to the reader that a missing logic in this nomenclature is \irdluup, with difference logic constraints on both real and integer variables.  We will later show that \rdluup is already undecidable, so it makes little sense to introduce any extension of it.

\section{Decidability of \lmix}
\label{sec:lmix}

The fragment \msoor is actually a restriction of the universal fragment of the monadic second-order theory of the real order $\mathbb{R}$, i.e., \msoor augmented with universal quantification of predicate variables. It has been established in~\cite{burgess1985decision} that the universal fragment of the monadic second-order theory of the real order $\mathbb{R}$ is decidable, which trivially implies the decidability of \msoor. We show here that its extension \lmix (and therefore \msoiro) is also decidable, by a reduction to \msoor.

\begin{theorem}
	\lmix and \msoiro are decidable.
\end{theorem}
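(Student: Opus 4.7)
The plan is to reduce the satisfiability problem for \lmix (and therefore for the sub-fragment \msoiro) to that of \msoor, whose decidability is available via Burgess's theorem~\cite{burgess1985decision} as noted in the section preamble. To do this, I would eliminate every atom $x \in \mathbb{Z}$ and every integer-guarded difference constraint, replacing them by \msoor formulas built around a fresh unary predicate symbol $\pz$ intended to denote the integers.

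Concretely, given an \lmix formula $\varphi$, I would first replace every atom $x \in \mathbb{Z}$ by $\pz(x)$. Then, every difference constraint $x - y \bowtie c$, which by the well-guarded condition occurs only where $\pz(x) \land \pz(y)$ is already ensured, would be replaced by an \msoor formula that counts $\pz$-elements lying between $y$ and $x$. For example, for $c > 0$, the atom $x - y = c$ becomes ``$y < x$, there exist $z_1 < \cdots < z_{c-1}$ strictly between $y$ and $x$ each satisfying $\pz$, and every $w$ with $y < w < x$ and $\pz(w)$ equals one of the $z_i$''; the remaining combinations of $\bowtie$ and of the sign of $c$ are handled by analogous finite formulas. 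The result is conjoined with axioms asserting that $\pz$ is unbounded above, unbounded below, and discrete, in the sense that every element satisfying $\pz$ has an immediate $\pz$-successor and an immediate $\pz$-predecessor; any set with these properties is order-isomorphic to $\mathbb{Z}$ as a subset of $\mathbb{R}$.

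For correctness, the forward direction is straightforward: any model of the original $\varphi$ becomes a model of the translation upon setting $\pz := \mathbb{Z}$. The reverse direction is the main obstacle. Suppose the translation has a model $I$ in which $\pz$ is interpreted as some set $S$ satisfying the axioms. I would enumerate $S$ in increasing order as $\{s_i\}_{i \in \mathbb{Z}}$, map $s_i \mapsto i$, and extend this bijection to an order-automorphism $\psi$ of $\mathbb{R}$ by choosing, on each open interval $(s_i, s_{i+1})$, an order-preserving bijection onto $(i, i+1)$; this is possible because both intervals are order-isomorphic to the open unit interval. The transported interpretation $\psi_* I$ then realizes $\pz$ as $\mathbb{Z}$ and preserves the truth of every \msoor atom. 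What remains to check, and is the subtle point, is that the counting encoding of $x - y \bowtie c$ in the translated formula corresponds, after this order-preserving transport, to the arithmetic meaning of the original difference constraint in $\psi_* I$. This works precisely because both endpoints are guarded to lie in $\pz$, so that on integers the arithmetic difference coincides with the number of $\pz$-steps between them, a quantity preserved by any order-isomorphism. Therefore the translation is satisfiability-preserving, which yields the decidability of \lmix, hence of \msoiro.
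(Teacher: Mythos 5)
Your overall strategy is the same as the paper's: reduce \lmix to \msoor by introducing a fresh unary predicate $\pz$ intended to denote $\mathbb{Z}$, rewrite $x\in\mathbb{Z}$ as $\pz(x)$ and each guarded difference constraint as a successor-counting formula over $\pz$, and recover a model of the original formula from a model of the translation via an order-automorphism of $\mathbb{R}$ carrying the interpretation of $\pz$ onto $\mathbb{Z}$. The gap is in your axiomatization of $\pz$. You require only that $\pz$ be unbounded in both directions and that every \emph{element of} $\pz$ have an immediate $\pz$-successor and $\pz$-predecessor, and you claim that any subset of $\mathbb{R}$ with these properties is $<$-isomorphic to $\mathbb{Z}$. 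That claim is false: the set $S=\{-n : n\geq 1\}\cup\{-\tfrac{1}{n} : n\geq 1\}\cup\{1+\tfrac{1}{n} : n\geq 1\}\cup\{n : n\geq 2\}$ satisfies all of your axioms (every element has an immediate neighbour on each side, and $S$ is unbounded both ways) but has order type $\zeta+\zeta$; a variant with a copy of $\mathbb{Z}$ embedded in each interval $(2k,2k+1)$ gives order type $(\zeta)^{\zeta}$. Such a set cannot be enumerated as $\{s_i\}_{i\in\mathbb{Z}}$, so your automorphism does not exist. Moreover this is not just a defect of the proof: equisatisfiability itself fails. The \lmix formula
$\big[\forall x,y.\,(x\in\mathbb{Z}\land y\in\mathbb{Z}\land x-y=1)\Rightarrow(P(y)\Rightarrow P(x))\big]\land\big[\exists x,y.\,x\in\mathbb{Z}\land y\in\mathbb{Z}\land x<y\land P(x)\land\lnot P(y)\big]$
is unsatisfiable over the standard integers (by induction along successors), yet its translation is satisfied by taking $\pz=S$ as above and $P$ true exactly on the first $\zeta$-copy: immediate $\pz$-successors never cross from one copy to the other, so the propagation clause holds vacuously across the gap.

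The repair is precisely the paper's stronger axiom: \emph{every real number}, not only every element of $\pz$, must have a least $\pz$-element above it and a greatest $\pz$-element below it (together with isolation of the $\pz$-points). This excludes accumulation points of $\pz$: the limit of a bounded monotone sequence of $\pz$-points would be a real with no $\pz$-predecessor (or successor), contradicting the axiom. That exclusion is exactly what Lemma~\ref{lemma:PINTisoZ} uses to show $M[\pz]$ has order type $\zeta$, and it also guarantees that the gaps between consecutive $\pz$-points tile all of $\mathbb{R}$, so that the piecewise order-isomorphisms you describe can actually be glued into an automorphism of $\mathbb{R}$. With that strengthened axiomatization, the remainder of your argument (forward direction by setting $\pz:=\mathbb{Z}$, backward direction by transport along the automorphism, correctness of the counting encoding for guarded endpoints) goes through essentially as in the paper.
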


Note that the decidability of \msoiro is a direct consequence of the decidability of \lmix, since \lmix is an extension of \msoiro.
The remaining of this section is thus dedicated to proving that \lmix is decidable.

\subsection{Recognizing Integer Values} 
\label{subsec:recog}

We first show how to define in \msoor a predicate $\pz$ over $\mathbb{R}$ that is \mbox{$<$-isomorphic} to $\mathbb{Z}$, i.e., such that there exists a bijection between the sets described by $\pz$ and $\mathbb{Z}$ that preserves the order relation over their elements.  Integer guards in \lmix will later be translated using $\pz$.  Intuitively, an integer-guarded variable in a \lmix formula will correspond to a variable taking its value in the set described by $\pz$ in the translated \msoor formula.

We axiomatize $\pz$ in \msoor as follows:\\
\mbox{\hspace{16pt}} $\bullet$ Every element of $\pz$ is isolated:\\
\mbox{\hspace{25pt}} $\forall x\, \exists\, y,z.\ y < x < z \land \forall t.\ [y < t < z \land \pz(t)] \Rightarrow t = x $.\\
\mbox{\hspace{16pt}} $\bullet$ Every point in $\mathbb{R}$ has a unique successor in $\pz$:\\
\mbox{\hspace{25pt}} $\forall x\, \exists\, y.\ x < y \land \pz(y) \land \forall t.\ x < t < y \Rightarrow \neg \pz(t)$.\\
\mbox{\hspace{16pt}} $\bullet$ Similarly, every point in $\mathbb{R}$ has a unique predecessor in $\pz$:\\
\mbox{\hspace{25pt}} $\forall x\, \exists\, y.\ y < x \land \pz(y) \land \forall t.\ y < t < x \Rightarrow \neg \pz(t)$.


The set of all integers is a model for $\pz$, therefore the above axiomatization is consistent.
The set of elements satisfying $\pz$ is necessarily infinite and does not admit a maximal or a minimal element. This is a direct consequence of the successor and predecessor axioms.
More interestingly, this set is also necessarily countable. Indeed, since each point is isolated, there exists an application that maps the elements satisfying $\pz$ to disjoint open intervals. Any set of disjoint intervals in $\mathbb{R}$ with non-zero length is necessarily countable~\cite{sierpinski1958cardinal}, since each of them contains a rational value that does not belong to the others. \\
It is now possible to define a successor relation on the real numbers satisfying $\pz$ with the formula  $\textit{Succ}(x,y) \! = \! \pz(x) \land \pz(y) \land y \! < \! x \land \forall z.\, y \! < \! z \! < \! x \Rightarrow \lnot \pz(z)$, i.e., $x$ is the successor of $y$, or equivalently, $y$ is the predecessor of $x$. \\
The axiomatization of $\pz$ is, in fact, precise enough to have the following lemma.

\begin{lemma}
	\label{lemma:PINTisoZ}
	For any model $M$ of $\pz$, the set $M[\pz]$ is $<$-isomorphic to $\mathbb{Z}$.
\end{lemma}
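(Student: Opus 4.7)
The plan is to pick an arbitrary starting element $a_0 \in M[\pz]$ (one exists, since applying the successor axiom to the point $0$ produces such an element), and then iteratively apply the successor and predecessor axioms to build a bi-infinite sequence $\ldots < a_{-2} < a_{-1} < a_0 < a_1 < a_2 < \ldots$ of elements of $M[\pz]$, where $a_{n+1}$ is the $\pz$-successor of $a_n$ and $a_{-n-1}$ the $\pz$-predecessor of $a_{-n}$. The conclusion will follow once I have shown (i) that this sequence is unbounded above and below in $\mathbb{R}$, and (ii) that every element of $M[\pz]$ occurs in it, for then $k \mapsto a_k$ is by construction an order-preserving bijection from $\mathbb{Z}$ to $M[\pz]$.

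For (i), I argue by contradiction. Suppose $s := \sup_{n \geq 0} a_n$ exists in $\mathbb{R}$. First, $s \notin M[\pz]$: otherwise $s = a_k$ for some $k$, but then the successor axiom yields $a_{k+1} \in M[\pz]$ with $a_{k+1} > a_k = s$, contradicting that $s$ is an upper bound. Next, apply the predecessor axiom to $s$ to obtain some $p \in M[\pz]$ with $p < s$ and no element of $M[\pz]$ in the open interval $(p,s)$. By definition of supremum, some $a_n$ satisfies $p < a_n$, and since $s \notin M[\pz]$ while $a_n \in M[\pz]$ we must have $a_n < s$, contradicting the emptiness of $(p,s) \cap M[\pz]$. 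The symmetric argument with the successor axiom shows the sequence is unbounded below.

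For (ii), take any $b \in M[\pz]$. By the unboundedness just established there exist indices $m < n$ with $a_m < b < a_n$. Since this index interval is finite, there must exist consecutive indices $k, k+1$ with $a_k \leq b \leq a_{k+1}$; if $b$ were strictly between them, this would contradict the definition of $a_{k+1}$ as the $\pz$-successor of $a_k$, so $b \in \{a_k, a_{k+1}\}$.

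The main obstacle is ruling out pathological order types such as $\mathbb{Z} + \mathbb{Z}$ or orders containing an accumulation point, since the mere existence of successors and predecessors inside $M[\pz]$ is not by itself enough. The leverage comes from the fact that the axioms are quantified over \emph{all} reals, not only over elements of $\pz$: this forbids $M[\pz]$ from having a supremum or accumulation point in $\mathbb{R}$, and is precisely what makes the argument in step (i) go through.
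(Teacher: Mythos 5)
Your proof is correct and follows essentially the same route as the paper's: both enumerate $M[\pz]$ by iterating the successor and predecessor operations from an arbitrary base point, and both rule out accumulation by applying the predecessor (resp.\ successor) axiom to the supremum (resp.\ infimum) of a bounded monotone chain of $\pz$-elements. The only difference is cosmetic --- the paper builds the map $f\colon M[\pz]\to\mathbb{Z}$ and proves it total, whereas you build the inverse enumeration $k\mapsto a_k$ and prove it surjective.
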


For convenience in the proof, we define $0_{\textit{int}}$ as an arbitrary existentially quantified value that belongs to the set described by $\pz$.

\begin{proof}	
	Given a model $M$ of the axiomatization of $\pz$, we need to define a bijection between the set $M[\pz]$ and $\mathbb{Z}$ that preserves order. 
	
	Let us define an application $f$ from $M[\pz]$ to $\mathbb{Z}$.  We set $f(0_{\textit{int}}) = 0$, and then define recursively: 
	\begin{itemize}
		\item $f(y) = f(x) + 1$ for each $x,y \in M[\pz]$ such that $y > 0_{\textit{int}}$ and $\textit{Succ}(y,x)$,
		\item $f(y) = f(x) - 1$ for each $x,y \in M[\pz]$ such that $y < 0_{\textit{int}}$ and $\textit{Succ}(x,y)$.
	\end{itemize}
	
	Thanks to the fact that every element of $M[\pz]$ has a unique predecessor and successor, it
	follows that $f$ ranges over the whole set $\mathbb{Z}$, proving that $f$ is surjective. Since it is clear that
	$f$ preserves order, it follows that $f$ is strictly increasing, and therefore injective.  It remains to show that $f$ is well defined for every
	element in $M[\pz]$.
	
	If there exists some element $y \in M[\pz]$ for which $f$ is not
	defined, it means that $f$ is not well-defined, in the sense that there exists either an element
	$y > 0_{\textit{int}}$ such that the interval $[0_{\textit{int}},y]$ contains an infinite number of elements
	satisfying $\pz$, or there exists an element $y < 0_{\textit{int}}$ such that the interval
	$[y,0_{\textit{int}}]$ contains an infinite number of elements satisfying $\pz$.  Since both cases are
	symmetric, we only address the former.  There must exist a strictly increasing
	infinite series of elements in $M[\pz]$ bounded by $y$.  Let us
	consider its limit $z\in \mathbb{R}$.  Because there must exist an element of
	$M[\pz]$ smaller than $z$ and arbitrarily close to $z$, it follows that $z$ cannot have a predecessor, which contradicts an axiom.  Therefore $f$ is well-defined, and
	every element of $M[\pz]$ is associated to an integer number. The application $f$ is therefore a bijection. \qed
\end{proof}

\subsection{Translating Formulas}

We are now able to describe the satisfiability-preserving translation of
formulas from \lmix to \msoor.  Consider a \lmix formula $\varphi$.  Without
loss of generality, we assume that $\pz$ does not appear in $\varphi$.  The
translation of $\varphi$ is defined as
$$\textit{AXIOMS}_{int}(\pz) \land \llbracket\varphi\rrbracket$$ 
where $\textit{AXIOMS}_{int}(\pz)$ is the conjunction of the axioms of $\pz$, and $\llbracket \cdot \rrbracket$ is a translation operator. This translation operator $\llbracket \cdot \rrbracket$ distributes over all Boolean operators and quantifiers, and corresponds to the identity transformation for most considered atoms, except in the following cases: 
\begin{itemize}
\item $\llbracket x \in \mathbb{Z} \rrbracket = \pz(x)$;
\item $\llbracket x - y \bowtie c \rrbracket = \exists z_0,\dots z_c\,.\  (y = z_0) \land (x \bowtie z_c) \land \bigwedge_{0 \leq i < c} \textit{Succ}(z_{i+1}, z_i)$,\\ for $c \in \mathbb{N}$ and $\bowtie \ \in \{<, \leq, =, \geq, >\}$. We assume that $z_0, \dots z_c$ are fresh variables w.r.t.\ $x$ and $y$.
\end{itemize} 
\ex $\llbracket x - y \leq 2 \rrbracket = \exists z_0, z_1, z_2.\ y = z_0 \land \textit{Succ}(z_1, z_0) \land \textit{Succ}(z_2, z_1) \land x \leq z_2$. \\
Notice that we only deal with the case $c \! \in \! \mathbb{N}$ since every atom of the form $x - y \bowtie c$ with $c \in \mathbb{Z} \setminus \mathbb{N}$ and $\bowtie \ \in \{<, \leq, =, \geq, >\}$ can be rewritten as $y - x \bowtie^\prime -c$ with the following correspondences: $(\bowtie, \bowtie') \in \{{(=, =)}, {(<, >)}, {(>, <)}, {(\geq, \leq)}, {(\leq, \geq)}\}$.

\subsection{Establishing Equisatisfiability}

Given a \lmix formula $\varphi$, the translation that we have introduced generates a corresponding \msoor formula $\psi$. To establish that they are equisatisfiable, we need to prove that if $\varphi$ admits a model, then $\psi$ also admits one, and reciprocally.

\begin{lemma}
	Given a \lmix formula $\varphi$, consider its translation into \msoor $\psi = \textit{AXIOMS}_{int}(\pz) \land \llbracket\varphi\rrbracket$. The formulas $\varphi$ and $\psi$ are equisatisfiable.
\end{lemma}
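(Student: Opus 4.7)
My plan is to prove the two implications of the equisatisfiability separately. The forward direction is a direct model construction; the backward direction requires transporting the model through an order-preserving bijection to align $M'[\pz]$ with $\mathbb{Z}$.

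For the forward direction, given $M \models \varphi$, I would let $M'$ agree with $M$ on every uninterpreted predicate of $\varphi$ and set $M'[\pz] = \mathbb{Z}$. The three axioms of $\pz$ are immediate since $\mathbb{Z}$ satisfies them inside $\mathbb{R}$. I would then prove by structural induction on $\varphi$ that $M \models \varphi$ iff $M' \models \llbracket \varphi \rrbracket$. Order atoms, atoms $P(x)$, and $\llbracket x \in \mathbb{Z} \rrbracket = \pz(x)$ are trivial. The delicate case is the difference atom $x - y \bowtie c$, for which I would handle the two well-guarded patterns as units: when the guard $x \in \mathbb{Z} \land y \in \mathbb{Z}$ holds, $M[x]$ and $M[y]$ belong to $\mathbb{Z} = M'[\pz]$, and the successor chain $z_i = M[y] + i$ witnesses the translated formula exactly when $M[x] - M[y] \bowtie c$; when the guard fails, the conjunctive pattern is false on both sides and the implicative pattern is vacuously true on both sides.

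For the backward direction, given $M' \models \psi$, Lemma~\ref{lemma:PINTisoZ} supplies an order-preserving bijection $f : M'[\pz] \to \mathbb{Z}$. The axioms of $\pz$ force $M'[\pz]$ to be discrete and unbounded in $\mathbb{R}$, so the complement $\mathbb{R} \setminus M'[\pz]$ is a disjoint union of open intervals $(a, b)$ whose endpoints are consecutive in $M'[\pz]$. I would extend $f$ to an order-preserving bijection $h : \mathbb{R} \to \mathbb{R}$ by mapping each such $(a, b)$ affinely onto $(f(a), f(a)+1)$. The \lmix model $M$ is then built by transporting $M'$ through $h$: set $M[P] = h(M'[P])$ for every uninterpreted predicate $P$ and $M[x] = h(M'[x])$ for every free variable. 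A structural induction on $\varphi$ then yields $M \models \varphi$: order atoms and atoms $P(x)$ are preserved by the order-preserving bijection $h$; $x \in \mathbb{Z}$ corresponds to $\pz(x)$ because $h$ maps $M'[\pz]$ bijectively onto $\mathbb{Z}$; Boolean connectives and first-order quantifiers lift because $h$ is a bijection on $\mathbb{R}$; and difference atoms are handled via the same guardedness split, using that the successor chain witnessing $\llbracket x - y \bowtie c \rrbracket$ lives in $M'[\pz]$ and, by the order-isomorphism $f$, its endpoint $z_c$ corresponds under $h$ to $M[y] + c$ in $\mathbb{Z}$, turning $M'[x] \bowtie z_c$ into $M[x] - M[y] \bowtie c$.

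The main obstacle is the backward direction: the explicit extension of $f$ to an order-preserving bijection $h$ on all of $\mathbb{R}$ relies essentially on the discreteness of $M'[\pz]$ together with the presence of predecessors and successors for every real, and the verification that the syntactic well-guardedness of \lmix is exactly the restriction making the successor-chain translation of difference constraints behave as integer arithmetic shifts once the model is transported through $h$.
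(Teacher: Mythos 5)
Your proposal is correct and follows essentially the same route as the paper: the forward direction keeps the model and interprets $\pz$ as $\mathbb{Z}$, and the backward direction transports the model of $\psi$ through an order-preserving bijection of $\mathbb{R}$ (the paper's automorphism $g$ is exactly your $h$, sending each gap between consecutive $\pz$-points onto the corresponding open unit interval). Your treatment is in fact somewhat more explicit than the paper's, notably in spelling out the structural induction and the case split on the well-guarded patterns for difference atoms.
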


\begin{proof}
If $\varphi$ is satisfiable, let $M$ be one of its models. Then, since $\psi$ shares the same free variables and predicates than $\varphi$ with the only addition of $\pz$, we can directly construct a model $M'$ of $\psi$ that is similar to $M$ for the shared variables and predicates, and that interprets $\pz$ so that $\pz(x)$ holds whenever $x \in \mathbb{Z}$. This is always possible since the only constraints on $\pz$ generated by the construction of $\psi$ are the axioms stated above.

\vspace{2pt}
If $\psi$ is satisfiable, then there exists a model $M$ of $\psi$. Let us construct a model $M'$ of $\varphi$. Let $0_\textit{int} \in \mathbb{R}$ be an arbitrary element of $M[\pz]$. We define an automorphism $g$ of $\mathbb{R}$, such that $g(0_\textit{int}) = 0$, and recursively $g(y) = g(x) + 1$ for $x,y \in M[\pz]$, $y>0_\textit{int}$ and $\textit{Succ}(y,x)$, and $g(y) = g(x) - 1$ for $x,y \in M[\pz]$, $y<0_\textit{int}$ and $\textit{Succ}(x,y)$. The automorphism $g$ maps each open interval between the $k$-th and \mbox{$(k\!+\!1)$-th} successors (resp.\ predecessors) of $0_\textit{int}$ in $M[\pz]$, onto the open interval $(k, k+1)$ (resp.\ $(-(k\!+\!1), -k)$) while preserving order.

$M'$ is defined by $M'[x] = g(M[x])$ for each free variable $x$ of the formula $\varphi$, and $M'[P] = \{g(x) \, | \, x \in M[P]\}$ for each uninterpreted predicate $P$ of $\varphi$. No unary predicate atom can be violated by $M'$ by definition. Furthermore, no order constraint can be violated by $M'$ either since $g$ preserves order. Regarding the difference logic constraints, the intermediate variables $z_i$ introduced in the translation are necessarily mapped to values in $M[\pz]$ since the \textit{Succ} relation enforces this property. Hence for each such variable, we have $g(M[z_i]) \in \mathbb{Z}$. Intuitively, this ensures that in $M'$ the difference between the values taken by the integer variables is consistent with the difference logic constraints.
It follows that $M'$ is a model of $\varphi$. 
\qed
\end{proof}

\section{Undecidability of \rdluup}
\label{sec:undec}

The result presented in the previous section establishes a lower bound for the decidability of our family of fragments.  A natural follow-up problem is to establish a corresponding upper bound, i.e., to find an extension of this logic that yields undecidability.  We show here that, when combined with uninterpreted unary predicates, as soon as difference logic constraints on reals are allowed, the logic becomes undecidable.  

We actually show a stronger result which is that a single unary predicate symbol is enough to yield undecidability. More precisely, we establish the undecidability of the restriction of \rdluup where only one predicate symbol is allowed, by reducing the halting problem of a Turing machine to the satisfiability problem over this restriction of \rdluup.

\begin{theorem}
	\label{thm:undec}
	Satisfiability is undecidable for \rdluup with a single predicate.
\end{theorem}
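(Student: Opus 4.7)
Given a Turing machine $\mathcal{M}$, my plan is to build a \rdluup formula $\Phi_\mathcal{M}$ using a single unary predicate $P$ such that $\Phi_\mathcal{M}$ is satisfiable if and only if $\mathcal{M}$ halts on the empty input; this reduces the halting problem to satisfiability in \rdluup. The interpretation of $P$ will play two roles simultaneously: it will define a discrete ``coordinate axis'' $\rel$ along the real line that indexes the computation, and its metric pattern (readable via difference constraints) will carry the combinatorial data of the computation trace.

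First, using a block of axioms $\paxiom$ similar in spirit to the axiomatization of $\pz$ in Section~\ref{subsec:recog}, I would force the set $\{x : P(x)\}$ to enumerate an $\mathbb{N}$-ordered sequence of isolated points $s_0 < s_1 < s_2 < \ldots$, with a distinguished minimum (expressed with $\leftcvg$ and named $\encodbeg$), unique successor $\onesucc$ and predecessor $\onepred$ relations at every non-extremal point, and no accumulation point. Iterating $\onesucc$ then yields the definable relations $\ksucc{k}$ for jumping $k$ support points ahead. What distinguishes \rdluup from \msoor here, and is the engine of the undecidability, is that integer-valued difference constraints on reals can observe the actual metric gap $y-x$ between two consecutive support points, not merely their order. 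I plan to restrict each such gap to a small finite set of integer values, so that the infinite sequence of gap lengths becomes an $\omega$-word over a finite alphabet encoding tape symbols and configuration delimiters. Formulas $\state{q}$ marking, at each support point, the Turing machine state attached to the corresponding tape cell can then be built by existentially quantifying a bounded number of reals at fixed integer offsets and testing their membership in $P$. The overall reduction formula $\Phi_\mathcal{M} = \paxiom \land \startformula \land \stepformula \land \haltformula$ will constrain the initial configuration to appear at $\encodbeg$, the transition function of $\mathcal{M}$ to relate every two consecutive configuration blocks, and some block to contain a halting state.

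The step I expect to be hardest is $\stepformula$, which must enforce the Turing machine transition rule between two consecutive configuration blocks on an unbounded tape. Since the used portion of the tape may grow with time, successive blocks have variable length, so no single fixed integer offset reliably aligns the $i$-th cell of configuration $t$ with the $i$-th cell of configuration $t+1$; yet transition correctness is exactly such a local cell-by-cell comparison. Overcoming this will require folding the alignment into the encoding itself, for instance by padding each configuration to a length that can be reconstructed from the previous one via the gap-length $\omega$-word, or by reserving certain gap values as ``pointer gaps'' that directly link corresponding cells across successive configurations. Once this alignment mechanism is in place, $\stepformula$ and $\haltformula$ become a routine first-order encoding of the transition table and halting states of $\mathcal{M}$, and any model of $\Phi_\mathcal{M}$ corresponds precisely to a halting computation of $\mathcal{M}$, yielding undecidability.
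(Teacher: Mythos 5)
Your overall strategy (reduce from the halting problem, use $P$ both to define a definable discrete ``support'' and to carry the configuration data, and let difference constraints read exact metric offsets) matches the paper's, but there is a genuine gap at precisely the step you flag as hardest, and the mechanism you sketch for it does not work. You lay the support points out as an $\mathbb{N}$-ordered sequence with no accumulation point and integer-bounded gaps, so the $t$-th configuration block occupies a real interval whose length grows with the portion of tape used at step $t$. Consequently the real distance between cell $i$ of configuration $t$ and cell $i$ of configuration $t+1$ is unbounded as $t$ grows, while difference constraints $x-y\bowtie c$ only admit \emph{fixed} integer constants $c$. ``Padding'' does not help (the needed offset still grows with $t$), and a ``pointer gap'' is a local property of two consecutive support points, so it cannot link a cell to its counterpart an unbounded distance away; nor can first-order logic over order define the index-preserving bijection between two unbounded discrete blocks. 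This is not a routine detail to be filled in later: without a working alignment, $\stepformula$ cannot be expressed and the reduction collapses.

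The paper's solution is exactly the idea your axioms rule out. Each configuration, including its entire bi-infinite tape, is compressed into a real interval of \emph{fixed} length $3$: the support inside $[3k,3k+3)$ has order type $N+\zeta+1+\zeta$, i.e., the head and tape cells form sets order-isomorphic to $\mathbb{Z}$ that accumulate at the endpoints of subintervals of length $1$. Two further axioms force the support pattern on $(\mathbf{1},\mathbf{2})$ to repeat at offset exactly $1$ and the pattern on $[\mathbf{0},\mathbf{3})$ to repeat at offset exactly $3$, so the fixed constants $+1$, $+3$, $+4$ in difference constraints align a head cell with its tape cell and each tape cell with its successor-configuration counterpart, for \emph{all} cells uniformly. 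Your requirement that the support have ``no accumulation point'' forecloses this construction. A secondary (repairable) divergence: the paper defines support points not as $\{x: P(x)\}$ but by the surrounding pattern of $P$ (a uniformly true open interval on the left, uniformly false on the right, via $\rel$), leaving the truth value of $P$ \emph{at} each support point free to encode the state bits, head marker, and tape symbols; your alternative of encoding data in gap lengths is plausible in principle but is not what carries the construction, and it does not address the alignment problem above.
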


\begin{corollary}
	Satisfiability is undecidable for \rdluup.
\end{corollary}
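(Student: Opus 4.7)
The plan is to reduce the halting problem for a deterministic Turing machine $\mathcal{M}$ to the satisfiability of an \rdluup formula using only the single predicate $P$. Since \rdluup has no $x\in\mathbb{Z}$ guard available, the first job is to simulate a discrete scaffold inside $\mathbb{R}$ on which the computation of $\mathcal{M}$ will be inscribed. The core idea is to define, from $P$ alone, an infinite discrete ``support'' subset of $\mathbb{R}$ whose consecutive elements are at fixed integer distance, and then to encode the entire space--time diagram of $\mathcal{M}$ as further decoration by $P$ of real points nearby.

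First I would write a package of axioms $\paxiom$ forcing the set $\{x \mid \rel(x)\}$ --- where $\rel$ is a first-order formula built from $P$ --- to be order-isomorphic to $\mathbb{N}$, with a leftmost point marked by $\encodbeg$, every point having a unique successor $\onesucc$ at distance $1$, and appropriate predecessor axioms ($\onepred$, $\leftcvg$, $\rsucc$) ruling out spurious behaviour on the left. This parallels what Section~\ref{subsec:recog} did for $\pz$, but with the crucial twist that we cannot assert $\rel(x) \Rightarrow x \in \mathbb{Z}$ since $x\in\mathbb{Z}$ is forbidden in \rdluup; instead, discreteness is enforced by combining an isolation axiom for $\rel$ with the difference constraint $y-x=1$ between consecutive support points, which is a legal atom of \rdluup.

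On top of this scaffold, one lays out the tableau of configurations of $\mathcal{M}$. A convenient layout dedicates a block of $k$ consecutive support points --- reachable using a $k$-successor macro $\ksucc{k}$ that unfolds to $k$ nested applications of $\onesucc$ --- to each tape cell, and stores the cell contents, the head marker and the current state $\state{i}$ as distinguished sub-patterns of $P$ within that block. Successive configurations are concatenated along the support, and the formulas $\startformula$, $\stepformula$, $\terminationformula$ and $\haltformula$ then assert, respectively, that the initial block encodes the starting configuration of $\mathcal{M}$, that for every two consecutive rows the second is obtained from the first by one step of $\mathcal{M}$'s transition function, that rows are properly terminated, and that some row contains a halting state. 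Crucially, $\stepformula$ is universally quantified over the scaffold and relates homologous cells in adjacent rows purely by difference constraints with constant integer offsets --- which is exactly what \rdluup offers.

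The main obstacle is ensuring that every model really contains a genuine $\mathbb{N}$-shaped support and that its decoration genuinely encodes a run of $\mathcal{M}$, without any ``cheating'' model exploiting the continuum of $\mathbb{R}$. One must rule out several disjoint support chains, support points accumulating at a limit, and extra $P$-marks at irrational offsets that would scramble the codewords for symbols and states. The first two pathologies are ruled out by an argument in the spirit of Lemma~\ref{lemma:PINTisoZ}: the predecessor axiom precludes limits of ascending support chains, while pinning consecutive supports at distance exactly $1$ forces an $\mathbb{N}$-isomorphism starting at $\encodbeg$. The third is handled by universally quantifying over all reals in each cell's neighborhood and constraining $P$ only at integer offsets from a support point. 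Once the scaffolding is shown to be robust, the equivalence ``$\mathcal{M}$ halts on the empty input $\iff$ $\paxiom \land \startformula \land \stepformula \land \terminationformula \land \haltformula$ is satisfiable'' follows by direct construction in both directions, yielding the theorem.
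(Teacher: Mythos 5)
There is a genuine gap, and it lies precisely in the design of your scaffold. You propose to force the support set $\{x \mid \rel(x)\}$ to be order-isomorphic to $\mathbb{N}$ with \emph{consecutive support points at distance exactly $1$}, explicitly ruling out accumulation points, and then to lay out one configuration after another as finite blocks of consecutive support points. This cannot work. A Turing machine configuration at step $i$ may use unboundedly many tape cells, so either (a) each configuration block has a fixed finite length, in which case you only simulate a machine on a bounded tape and halting from the empty tape becomes decidable, or (b) the blocks grow with $i$, in which case the offset between homologous cells of consecutive configurations is not a constant and cannot be expressed by difference constraints $y-x \bowtie c$ with $c$ a fixed integer. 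More fundamentally, on a unit-spaced $\mathbb{N}$-shaped grid the atom $y-x=c$ just says ``$y$ is the $c$-th support successor of $x$,'' so your whole construction is interpretable in \idluup, i.e., essentially in \sos, which is decidable; no reduction of the halting problem can go through along that route.

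The idea you are missing is that the undecidability proof must exploit the \emph{density} of $\mathbb{R}$: the paper's grid has order type $\big(N+\zeta+1+\zeta\big)^{\omega}$, where each configuration occupies an interval $[3k,3k+3)$ of fixed real length $3$, and the entire bi-infinite tape (order type $\zeta$) of that configuration is compressed into a bounded subinterval $(b_1,b_2)$ whose support points accumulate at $b_1$ and $b_2$ --- exactly the ``pathology'' you set out to forbid. Because the pattern is forced to repeat with exact real period $3$ ($\textit{Axiom}_7$) and with offset $1$ between the head and tape tracks ($\textit{Axiom}_6$), the cell corresponding to a support point $z$ in one configuration sits at $z+1$ and $z+4$ within and across configurations, so the step relation over infinitely many cells is expressed by finitely many constant-offset difference constraints. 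This is precisely the feature available over $\mathbb{R}$ and not over $\mathbb{Z}$ (a bounded integer interval contains only finitely many points), and it is the crux of why \rdluup is undecidable while \idluup is not. Your outline of $\startformula$, $\stepformula$ and the halting condition is otherwise in the spirit of the paper, but without this compression step the reduction does not establish undecidability.
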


The remaining of this section is dedicated to proving Theorem~\ref{thm:undec}. We consider w.l.o.g.\ Turing machines defined over an alphabet with only two symbols and no explicit blank symbol~\cite{shannon1956universal}. This choice leads to a simpler proof. 

\subsection{Definitions}

The proof is by reduction from the halting problem for a Turing machine with a single bi-infinite tape, 
starting from a blank tape (i.e., a tape filled with the symbol $0$).  Consider a Turing machine 
${\cal M} = (Q, \Sigma, q_I, q_F, \Delta)$, where
\begin{itemize}
	\item
	$Q$ is a finite nonempty set of states,
	\item
	$\Sigma$ is the alphabet $\{ 0, 1 \}$,
	\item
	$q_I \in Q$ is the initial state,
	\item
	$q_F \in Q$ is the halting state,
	\item
	$\Delta \subseteq \{(Q \setminus \{ q_F\}) \times \Sigma
	\times Q \times \Sigma \times \{L, R\} \}$
	is the transition relation, assumed to be total over its first two components, i.e., for any pair $(q, \alpha) \in (Q \setminus \{ q_F\}) \times \Sigma$, there exists a tuple $(q, \alpha, q', \alpha', \lambda) \in \Delta$.
\end{itemize}
A \textit{configuration} $C$ of such a Turing machine is a triplet containing the current state $q$, the content of the tape $t \in \{0,1\}^\mathbb{Z}$ and the position of the head $h \in \mathbb{Z}$.
Since the machine starts from a blank tape, the initial configuration is $C_0 = (q_I, 0^\mathbb{Z}, 0)$.

A \textit{run} $\rho$ of length $n \in \mathbb{N}$ (resp.\ $n = + \infty$) of such a Turing machine is a finite (resp.\ infinite) sequence of configurations $(C_i)_{i \in [0,n]}$ (resp.\ $(C_i)_{i \in \mathbb{N}}$), such that for any two consecutive configurations $C_i = (q_i, t_i, h_i)$ and $C_{i+1} = (q_{i+1}, t_{i+1}, h_{i+1})$ there exists a transition $(q, \alpha, q', \alpha', \lambda) \in \Delta$ such that:

\begin{itemize}
	\item 
	$q=q_i$ and $q' = q_{i+1}$,
	\item 
	$t_i[h_i] = \alpha$, i.e., the tape cell at position $h_i$ contains the symbol $\alpha$,
	\item 
	$t_{i+1}[h_i] = \alpha'$,
	\item 
	$t_{i+1}[k] = t_i[k]$, for every $k\in \mathbb{Z}$,  $k \neq h_i$,
	\item 
	$h_{i+1} = h_i + 1$ if $\lambda = R$, and $h_{i+1} = h_i - 1$ if $\lambda = L$.
\end{itemize}
A \textit{halting run} is a finite run such that the state of its last configuration is the halting state $q_F$.

\subsection{Encoding Runs}
Our goal is to encode a run of a Turing machine (as described before), i.e., encode the state, the tape content, and the position of the head for each configuration of such a run. 
Starting from the initial configuration, we must also ensure the coherence of the run w.r.t.\ the Turing machine transition relation, by connecting every two consecutive configurations.
Our idea is to define an infinite sequence of intervals on the real line, such that each interval contains the encoding of its corresponding configuration (i.e., the first interval will contain the first configuration of the run, and so on). Difference constraints can then be used to connect consecutive configurations. 

Let $N = \lceil\log_2 (|Q|) \rceil$. Each state $q \in Q$ of $\mathcal{M}$ can therefore be uniquely encoded with $N$ Boolean values $b^q_1, \dots b^q_N$.
We want to encode consecutive configurations of the Turing machine using a single predicate $P$ over $\mathbb{R}$. In order to do so, we first need to describe a subset of $\mathbb{R}$ that will act as a grid supporting the encoding of the state, the tape content, and the head position of the current configuration.

We use the concept of linear ordering~\cite{rosenstein1982linear} to describe the shape of the grid. A \emph{linear ordering} $J$ is a totally ordered set, i.e., a set equipped with a binary relation $<$ which is irreflexive (for all $j$ in $J$,  $j\not<j$), asymmetric (for all $j,k$ in $J$, if $j<k$, then $k\not<j$), transitive (for all $i,j,k$ in $J$, if $i<j$ and $j<k$, then $i<k$), and complete (for all $j,k \in J$, either $j = k$, $j<k$, or $k<j$). The \emph{order type} of a linear ordering $J$ is the class of all linear orderings $<$-isomorphic to $J$.
The order types of a singleton, the set composed of the $N$ first natural numbers, $\mathbb{N}$, and $\mathbb{Z}$ are respectively denoted by $1$, $N$, $\omega$, and $\zeta$. 
The concatenation of two linear orderings $J$ and $K$ (where their associated order relations are respectively $<_J$ and $<_K$) is denoted by $J + K$. It corresponds to the linear ordering composed of the set of pairs $\{(j, 1) \,|\, j \in J\} \cup \{(k, 2)\,|\, k \in K\}$, and equipped with the order relation $<$, defined by $(j_1, 1) < (j_2, 1)$ if $j_1 <_J j_2$, $(k_1, 2) < (k_2, 2)$ if $k_1 <_K k_2$, and $(j, 1)<(k, 2)$ for every $j\in J$ and $k\in K$.
More generally, given two linear orderings $J$ and $K$, the linear ordering $(J)^K$ is the set of pairs $(j, k)$ with $j \in J$ and $k \in K$, with the order relation $<$ such that $(j_1, k_1) < (j_2, k_2)$ if either $k_1 <_K k_2$, or $k_1 =_K k_2$ and $j_1 <_J j_2$. These operators are naturally extended on order types. For instance, the order type $(\omega)^\omega$ is the class of all linear orderings $<$-isomorphic to $\mathbb{N}^2$.

The grid we consider is a linear ordering that is a subset of $\mathbb{R}$, of order type $\big(N + \zeta + 1 + \zeta \big)^{\omega}$. An ordering of order type $N + \zeta + 1 + \zeta$ within the interval $[0,3)$ is depicted in Figure~\ref{fig:DepictLinOrder}. Each dot corresponds to a natural number and each vertical line corresponds to an element of the linear ordering. The first $N$ points will support the encoding of a state. The first subordering that is \mbox{$<$-isomorphic} to $\mathbb{Z}$ (i.e., of order type $\zeta$) will be used to encode the position of the head, while the second one will support the encoding of the tape content. The whole grid is composed of an infinite repetition of the subordering $N + \zeta + 1 + \zeta$ (i.e., it is repeated on the intervals $[3k,3k+3)$ for all $k \in \mathbb{N}$), hence the $\omega$ exponent.
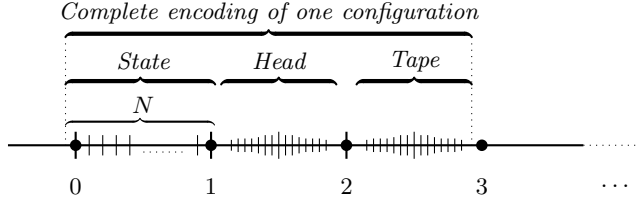
\begin{figure}[h]
	\centering
		\begin{tikzpicture}[scale=.9,node distance=2cm]
			\path[use as bounding box] (-1, 4.6) rectangle (8.5, 7);
			\node[] at (0,4.4) {$0$};
			\node[] at (2,4.4) {$1$};
			\node[] at (4,4.4) {$2$};
			\node[] at (6,4.4) {$3$};
			\node[] at (8,4.4) {$\dots$};
			
			\draw[thick] (0,5.2) -- (0,4.8);
			\draw[thick] (2,5.2) -- (2,4.8);
			\draw[thick] (4,5.2) -- (4,4.8);
	
			\node[circle, fill, scale=.5] at (0,5) {};
			\node[circle, fill, scale=.5] at (2,5) {};
			\node[circle, fill, scale=.5] at (4,5) {};
			\node[circle, fill, scale=.5] at (6,5) {};
			
			\draw (3,5.2) -- (3,4.8);
			\draw (3.1,5.15) -- (3.1,4.85);
			\draw (2.9,5.15) -- (2.9,4.85);
			\draw (3.2,5.15) -- (3.2,4.85);
			\draw (2.8,5.15) -- (2.8,4.85);
			\draw (3.3,5.12) -- (3.3,4.88);
			\draw (2.7,5.12) -- (2.7,4.88);
			\draw (3.4,5.1) -- (3.4,4.9);
			\draw (2.6,5.1) -- (2.6,4.9);
			\draw (3.5,5.1) -- (3.5,4.9);
			\draw (2.5,5.1) -- (2.5,4.9);
			\draw (3.6,5.1) -- (3.6,4.9);
			\draw (2.4,5.1) -- (2.4,4.9);
			\draw (3.7,5.08) -- (3.7,4.92);
			\draw (2.3,5.08) -- (2.3,4.92);
			
			\draw (0.2,5.15) -- (0.2,4.85);
			\draw (0.4,5.15) -- (0.4,4.85);
			\draw (0.6,5.15) -- (0.6,4.85);
			\draw (0.8,5.15) -- (0.8,4.85);
			\draw[dotted] (1, 4.9) -- (1.6, 4.9);
			\draw (1.8,5.15) -- (1.8,4.85);
					
			\draw[thick] [decorate, decoration = {calligraphic brace}] (-0.15,5.3) -- (2.05,5.3);
			\node[] at (1,5.6) {$N$};
					
			\draw (5,5.2) -- (5,4.8);
			\draw (5.1,5.15) -- (5.1,4.85);
			\draw (4.9,5.15) -- (4.9,4.85);
			\draw (5.2,5.15) -- (5.2,4.85);
			\draw (4.8,5.15) -- (4.8,4.85);
			\draw (5.3,5.12) -- (5.3,4.88);
			\draw (4.7,5.12) -- (4.7,4.88);
			\draw (5.4,5.1) -- (5.4,4.9);
			\draw (4.6,5.1) -- (4.6,4.9);
			\draw (5.5,5.1) -- (5.5,4.9);
			\draw (4.5,5.1) -- (4.5,4.9);
			\draw (5.6,5.1) -- (5.6,4.9);
			\draw (4.4,5.1) -- (4.4,4.9);
			\draw (5.7,5.08) -- (5.7,4.92);
			\draw (4.3,5.08) -- (4.3,4.92);			
			
			\draw[thick] (-1, 5) -- (7.5,5);
			\draw[dotted] (7.5,5) -- (8.5,5);
			
			\draw[ultra thick] [decorate, decoration = {calligraphic brace}] (-0.15,5.9) -- (2.06,5.9);
			\node[] at (1,6.25) {$\textit{State}$};
			
			\draw[ultra thick] [decorate, decoration = {calligraphic brace}] (2.15,5.9) -- (3.85,5.9);
			\node[] at (3,6.25) {$\textit{Head}$};
			
			\draw[ultra thick] [decorate, decoration = {calligraphic brace}] (4.15,5.9) -- (5.85,5.9);
			\node[] at (5,6.25) {$\textit{Tape}$};
			
			\draw[ultra thick] [decorate, decoration = {calligraphic brace}] (-0.15,6.6) -- (5.85,6.6);
			\node[] at (2.85,6.95) {$\textit{Complete encoding of one configuration}$};			
			
			\draw[dotted] (-0.15, 6.6) -- (-0.15, 5);
			\draw[dotted] (5.85, 6.6) -- (5.85, 5);
		\end{tikzpicture}
		\caption{A visual representation of a linear ordering of order type $N + \zeta + 1 + \zeta$.}
		\label{fig:DepictLinOrder}
\end{figure}

\subsection{Defining the Support of the Encoding}
\label{subsec:support}

Let us first define concretely the support of the encoding of the Turing machine configurations.
The difficulty lies in describing the grid using a single predicate $P$, without meddling with the actual encoding of the configurations afterwards. Our solution is to characterize the points that belong to the grid by enforcing that such a point is surrounded by an open interval where $P$ is uniformly \textit{true} on the left, and by an open interval where $P$ is uniformly \textit{false} on the right, such as depicted in Figure~\ref{fig:SingleRelPoint}. We do not specify yet how $P$ behaves on $x$, as this is how the configurations will actually be encoded later.
\begin{figure}[b]
	\centering
		\begin{tikzpicture}
			\path[use as bounding box] (3, 5) rectangle (7, 5.3);
			\filldraw[] (3,4.85) rectangle ++(2,0.3);
			\draw[] (5,4.85) rectangle ++(2,0.3);
			\node at (4, 5.4) {$P$};
			\node at (6, 5.4) {$\lnot P$};
			\draw (5,5.25) -- (5,4.75);
			\node at (5,5.4) {$x$};
		\end{tikzpicture}
	\caption{The real number $x$ belongs to the grid, since it is surrounded by a \textit{true} (black) open interval on the left, and a \textit{false} (white) open interval on the right.}
	\label{fig:SingleRelPoint}
\end{figure}
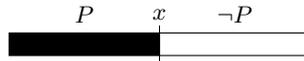

Such a characterization is easy to express in our restriction of \rdluup:
$$\rel(x) \! = \! (\exists y.\, y \! <\! x \land \forall z.\, y \! < \! z \! < \! x \Rightarrow P(z)) \land (\exists y.\, x \! < \! y \land \forall z.\, x \! < \! z \! < \! y \Rightarrow \lnot P(z))$$

Let us now partially axiomatize the predicate $P$ such that the set of \textit{supporting} points constitutes a linear ordering of order type $\big(N + \zeta + 1 + \zeta \big)^{\omega}$:
\begin{enumerate}[(a)]
	\item Let $\mathbf{0}$ be a variable and $\mathbf{1}, \mathbf{2}$ and $\mathbf{3}$ be respectively the $+1$-successor of $\mathbf{0}$, $\mathbf{1}$ and $\mathbf{2}$:\\
	  $\textit{Axiom}_1 = (\mathbf{1} = \mathbf{0} + 1) \land (\mathbf{2} = \mathbf{1} + 1) \land (\mathbf{3} = \mathbf{2} + 1)$\\
           These free variables are implicitly existentially quantified in the final formula.
          
	Notice that the variable $\mathbf{0}$ can be interpreted as any real value, which only acts as a landmark for the beginning of the grid.
	
	\item $\mathbf{0}$, $\mathbf{1}$ and $\mathbf{2}$ are \emph{supporting} points:\\
	$\textit{Axiom}_2 = \rel(\mathbf{0}) \land \rel(\mathbf{1}) \land \rel(\mathbf{2})$
	
	\item $P$ is uniformly true before $\mathbf{0}$, i.e., there are no supporting points before $\mathbf{0}$:\\
	$\textit{Axiom}_3 = \forall x.\, x<\mathbf{0} \Rightarrow P(x)$
	
	\item There are exactly $N-2$ \textit{supporting} points within the interval $(\mathbf{0},\mathbf{1})$:\\
	$\textit{Axiom}_4 = \exists x_1, x_2, \dots x_{N}.\, x_1 = \mathbf{0} \land x_{N} = \mathbf{1} \\ 
	\mbox{\hspace{45pt}} \land \bigwedge_{1\leq i<N} \big(\mathbf{0} \leq x_i< \mathbf{1} \land \onesucc(x_{i+1}, x_i)\big)$
	
	where $\onesucc(x,y)$ is a formula that states that $x$ is the first \textit{supporting} real value that is strictly greater than $y$, i.e., $x$ is the successor of $y$ on the grid. It is defined as follows:\\ 
	$\onesucc(x,y) \! = \! y \! < \! x \land \rel(x) \land \rel(y) \land \forall z.\, y \! < \! z \! < \! x \Rightarrow \lnot \rel(z)$\\
	We also define an analogous formula to express that $x$ is the predecessor of $y$: $\onepred(x, y) = \onesucc(y, x)$.
	
	\item The set of \textit{supporting} points within $(\mathbf{1},\mathbf{2})$ is $<$-isomorphic to $\mathbb{Z}$. This is done similarly to the axiomatization of $\pz$ (cf. Section \ref{subsec:recog}). But because $\mathbf{1}$ (resp.\ $\mathbf{2}$) is a \textit{supporting} point, there must exist a uniformly false (resp.\ true) interval of $P$ at its right (resp.\ left) where no other \textit{supporting} points can appear. All the \textit{supporting} points will therefore be constrained to appear within a smaller interval $(b_1, b_2)$ with \mbox{$\mathbf{1}<b_1<b_2<\mathbf{2}$}, as illustrated in Figure~\ref{fig:1predicate}.
	\begin{align}
		\textit{Axiom}_5 = \,& [\exists b_1, b_2.\, \mathbf{1}<b_1<b_2<\mathbf{2}] \label{line:case1}\\
		\land \ &[\forall x.\, (b_1<x<b_2) \Rightarrow \exists y.\, x<y<b_2 \land \rel(y) \nonumber \\
		& \qquad \qquad \qquad \qquad \qquad \qquad \land \forall z.\, x<z<y \Rightarrow \lnot \rel(z)] \label{line:case3}\\
		\land \ &[\forall x.\, (b_1<x<b_2) \Rightarrow \exists y.\, b_1<y<x \land \rel(y) \nonumber \\
		& \qquad \qquad \qquad \qquad \qquad \qquad \land \forall z.\, y<z<x \Rightarrow \lnot \rel(z)] \label{line:case4}\\
		\land \ &[\forall x.\, (1<x<2 \land \rel(x)) \Rightarrow b_1<x<b_2] \label{line:case5}
	\end{align}
	This axiom can be broken down into these elementary pieces:
	\begin{enumerate}
		\item[(\ref{line:case1})] there exists an open interval $(b_1, b_2)$ such that $\mathbf{1}<b_1<b_2<\mathbf{2}$,
		
		
		\item[(\ref{line:case3})] each real value in $(b_1, b_2)$ has a \textit{supporting} successor,
		
		\item[(\ref{line:case4})] each real value in $(b_1,b_2)$ has a \textit{supporting} predecessor,
		
		\item[(\ref{line:case5})] there are no \textit{supporting} points within $(\mathbf{1}, b_1)$, nor within $(b_2, \mathbf{2})$.
	\end{enumerate}

	\item The pattern of \textit{supporting} points within $(\mathbf{1},\mathbf{2})$ is repeated onto the interval $(\mathbf{2},\mathbf{3})$ with an exact offset of $1$: \\
	$\textit{Axiom}_6 = \forall x.\, \mathbf{1}<x<\mathbf{2} \Rightarrow (\rel(x) \Leftrightarrow \rel(x+1))$ 
	
	\item The pattern of \textit{supporting} points within $[\mathbf{0},\mathbf{3})$ is repeated onto every interval $[\mathbf{3k}, \mathbf{3k+3})$ for $k \in \mathbb{N}$: \\
	$\textit{Axiom}_7 = \forall x.\, x\geq \mathbf{0} \Rightarrow (\rel(x) \Leftrightarrow \rel(x+3))$	
\end{enumerate}

Notice that for $\textit{Axiom}_7$, it is not enough that a similar pattern appears within each interval $[\mathbf{3k}, \mathbf{3k+3})$: there must be an exact offset of $3$ with the previous interval. This is mandatory to connect two consecutive configurations and ensure that they are coherent with the transition relation of the Turing machine, as defined later. The same goes for $\textit{Axiom}_6$, where the exact offset of $1$ will allow to connect the position of the head to the tape content within a single configuration.

\begin{figure}[h]
	\centering
		\begin{tikzpicture}[scale=.7,node distance=2cm]
			\path[use as bounding box] (0, 4.7) rectangle (9, 5.3);
			
			\draw[dotted] (1.7,5) -- (1.95,5);
			\draw[dotted] (7.05,5) -- (7.35,5);
			
			\filldraw[] (0,4.8) rectangle ++(-1.5,0.4);
			\draw[] (0,4.8) rectangle ++(1.5,0.4);
			
			\filldraw[] (7.5,4.8) rectangle ++(1.5,0.4);
			\draw[] (9,4.8) rectangle ++(1.5,0.4);
			
			\filldraw[] (3.7,4.8) rectangle ++(0.8,0.4);
			\draw[] (4.5,4.8) rectangle ++(0.8,0.4);
			
			\filldraw[] (2.6,4.85) rectangle ++(0.4,0.3);
			\draw[] (3,4.85) rectangle ++(0.4,0.3);
			
			\filldraw[] (5.6,4.85) rectangle ++(0.4,0.3);
			\draw[] (6,4.85) rectangle ++(0.4,0.3);
			
			\filldraw[] (2,4.9) rectangle ++(0.2,0.2);
			\draw[] (2.2,4.9) rectangle ++(0.2,0.2);
			
			\filldraw[] (6.6,4.9) rectangle ++(0.2,0.2);
			\draw[] (6.8,4.9) rectangle ++(0.2,0.2);
			
			\draw[thick] (0,5.3) -- (0,4.7);
			\draw[thick] (9,5.3) -- (9,4.7);
			\node[] at (0,4.4) {$\mathbf{1}$};
			\node[] at (9,4.4) {$\mathbf{2}$};
			
			\draw[thick] (1.65,5.4) -- (1.65,4.6);
			\draw[thick] (7.4,5.4) -- (7.4,4.6);
			\node[] at (1.65,4.4) {$b_1$};
			\node[] at (7.4,4.4) {$b_2$};
		\end{tikzpicture}
		\caption{The points of the grid surrounded by open \textit{true} (black) and \textit{false} (white) intervals within $(\mathbf{1},\mathbf{2})$.}
		\label{fig:1predicate}
\end{figure}
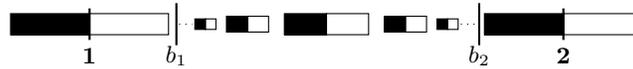

\noindent The formula $\paxiom = \underset{1 \leq k \leq 7}{\bigwedge} \textit{Axiom}_k$ axiomatizes the predicate $P$.

\begin{lemma}
	\label{lemma:consistency}
	The formula $\paxiom$ is consistent.
\end{lemma}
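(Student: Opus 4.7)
The plan is to exhibit an explicit model of $\paxiom$, i.e., to pick concrete interpretations for the free variables $\mathbf{0},\mathbf{1},\mathbf{2},\mathbf{3}$ and a concrete predicate $P \subseteq \mathbb{R}$ satisfying all seven axioms. I take $\mathbf{0}=0$, $\mathbf{1}=1$, $\mathbf{2}=2$, $\mathbf{3}=3$, which makes $\textit{Axiom}_1$ automatic. The remaining task is to define $P$ so that the supporting points inside $[0,3)$ form a linear ordering of type $N + \zeta + 1 + \zeta$, and so that the whole picture repeats with period $3$ on $[0,\infty)$.

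First I fix where the supporting points should be. Pick $b_1,b_2$ with $1 < b_1 < b_2 < 2$ and a strictly increasing sequence $(s_n)_{n\in\mathbb{Z}}$ in $(b_1,b_2)$ with $s_n \to b_1$ as $n \to -\infty$ and $s_n \to b_2$ as $n\to +\infty$ (e.g.\ $s_n = 3/2 + n/(4(|n|+1))$ with $b_1=5/4$, $b_2=7/4$). The intended supporting set inside $[0,3)$ is
\[ S_0 \;=\; \{k/(N-1) : 0 \le k \le N-1\} \;\cup\; \{s_n : n \in \mathbb{Z}\} \;\cup\; \{2\} \;\cup\; \{s_n + 1 : n \in \mathbb{Z}\}, \]
and the full intended support is $S = \bigcup_{k \in \mathbb{N}} (S_0 + 3k)$. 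Then I define $P$ as follows: $P$ is uniformly true on $(-\infty,0)$; on every gap $(s,s')$ between two consecutive points of $S$ in the same period, $P$ is false on the left half $(s,(s+s')/2)$ and true on the right half $((s+s')/2,s')$; on each padding interval $(3k+1, 3k+b_1)$ the predicate $P$ is uniformly false and on $(3k+b_2, 3k+2)$ uniformly true; the values on the negligible set $S$ itself are irrelevant. By construction every $s\in S$ has a true open interval immediately to its left and a false one to its right, so $\rel(s)$ holds.

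With this setup, the axioms are verified directly. $\textit{Axiom}_2$ and $\textit{Axiom}_3$ are immediate; $\textit{Axiom}_4$ is witnessed by the chain $x_i = (i-1)/(N-1)$ for $i=1,\dots,N$; $\textit{Axiom}_5$ is witnessed by $b_1,b_2$, since every $x \in (b_1,b_2)$ lies in a gap $(s_n,s_{n+1})$ and therefore has both a supporting successor and a supporting predecessor inside $(b_1,b_2)$; and $\textit{Axioms}_6$ and $_7$ hold by construction, since $S$ and $P$ are defined to be shift-invariant under $+1$ between $(1,2)$ and $(2,3)$ and under $+3$ on $[0,\infty)$.

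The main delicate step is confirming that the \emph{intended} non-supporting points really do fail $\rel$, so that $S$ is exactly the supporting set. The critical case is the accumulation points $b_1$ and $b_2$: in every right-neighbourhood of $b_1$ there are infinitely many gaps $(s_n,s_{n+1})$ on which $P$ takes both truth values, hence $P$ is not uniformly false on any $(b_1, b_1+\eta)$, and the right clause of $\rel(b_1)$ fails; the argument for $b_2$ is symmetric. For any $x$ in a padding interval such as $(1,b_1)$, a left-neighbourhood of $x$ is uniformly false, so the left clause of $\rel$ fails; and for $x$ strictly between two consecutive points of $S$, the sign change of $P$ at the midpoint makes either the left or the right clause fail depending on which side of the midpoint $x$ lies. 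This exhausts all non-supporting points, so $S$ is precisely the set of supporting reals and the model satisfies $\paxiom$.
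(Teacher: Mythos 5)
Your proof is correct and follows essentially the same route as the paper's: both exhibit an explicit model with $N$ equally spaced supporting points in $[0,1]$, two $\zeta$-type clusters accumulating at interior points $b_1<b_2$ of $(1,2)$ padded by uniformly false/true open intervals, and periodic repetition of the whole pattern. The only difference is organizational --- you fix the intended supporting set first and derive $P$ from the half-false/half-true rule on its gaps, whereas the paper defines $P$ directly as an explicit union of intervals --- and your explicit verification that $b_1$, $b_2$, the padding intervals and the gap-interior points all fail $\rel$ is a welcome addition that the paper leaves implicit.
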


The proof sketch below provides the key ideas to construct a model of $\paxiom$. The complete construction is described in the Appendix.

\begin{proof}
Let us construct a subset $S$ of $\mathbb{R}$ that is a model of $\paxiom$. 
Firstly, we make every negative number belong to $S$, which ensures that there do not exist negative supporting points. The interval $[0,1]$ is then cut into $2N-2$ intervals of equal length, which alternate between being included in $S$, and being disjoint from $S$. This ensures the existence of exactly $N-1$ supporting points within the interval $(- \infty, 1)$, $0$ being the first; $1$ will be considered later. These $N-1$ supporting points are referred to as $s_1, s_2, \dots s_{N-1}$ and are depicted in Figure~\ref{fig:model01}. Recall that the supporting points are exactly those surrounded by an interval of $S$ (i.e., black on the figure) on the left, and an interval disjoint from $S$ (i.e., white) on the right.

\begin{figure}[t]
	\centering
	\begin{tikzpicture}[scale=.7,node distance=2cm]
		\path[use as bounding box] (-3.5, 4.6) rectangle (10, 5.4);
		
		\filldraw (0,4.8) rectangle ++(-3.5,0.4);
		
		\draw (0,4.8) rectangle ++(.75, 0.4);
		\filldraw (.75,4.8) rectangle ++(.75, 0.4);
		
		\draw (1.5,4.8) rectangle ++(.75, 0.4);
		\filldraw (2.25,4.8) rectangle ++(.75, 0.4);
		
		\draw (3,4.8) rectangle ++(.75, 0.4);
		\filldraw (3.75,4.8) rectangle ++(.75, 0.4);
		\draw (4.5,4.8) rectangle ++(.75, 0.4);
		
		\draw[thick, dotted] (5.25, 5) -- (7.75, 5);
		
		\filldraw (10,4.8) rectangle ++(-.75, 0.4);
		\draw (9.25,4.8) rectangle ++(-.75, 0.4);			
		\filldraw (8.5,4.8) rectangle ++(-.75, 0.4);
		
		\draw[thick] (1.5,5.3) -- (1.5,4.7);
		\draw[thick] (3,5.3) -- (3,4.7);
		\draw[thick] (4.5,5.3) -- (4.5,4.7);
		\draw[thick] (8.5,5.3) -- (8.5,4.7);
		\node[] at (0,5.6) {$s_1$};
		\node[] at (1.5,5.6) {$s_2$};
		\node[] at (3,5.6) {$s_3$};
		\node[] at (4.5,5.6) {$s_4$};
		\node[] at (8.5,5.6) {$s_{N-1}$};
		\node[] at (10,5.6) {$s_N$};
		
		\draw[thick] (0,5.3) -- (0,4.7);
		\draw[thick] (10,5.3) -- (10,4.7);
		\node[] at (0,4.4) {$0$};
		\node[] at (10,4.4) {$1$};
		\node[] at (-3.5,4.4) {$-\infty$};	
	\end{tikzpicture}
	\caption{A model for the axiomatization of $P$ over the interval $(-\infty, 1)$.}
	\label{fig:model01}
\end{figure}
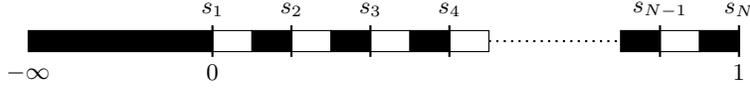

In order to make the real value $1$ the $N$-th supporting point, it is enough to make an interval on its right disjoint from $S$, e.g., the interval $(1,1+\frac{1}{4})$. Symmetrically, we make the interval $(2 - \frac{1}{4}, 2)$ included in $S$, satisfying the left part of the requirement for the real value $2$ to be a supporting point. \\
We further characterize $S$ such that the set of supporting points within the interval $(1+\frac{1}{4}, 2-\frac{1}{4})$ is $<$-isomorphic to $\mathbb{Z}$. This can be done by partitioning the open interval $(1+\frac{1}{4}, 2-\frac{1}{4})$ into a bi-infinite sequence of open intervals alternating between being included and disjoint from $S$, as depicted in Figure~\ref{fig:modelAxiomSupport}.

\begin{figure}[h]
	\centering
	\begin{tikzpicture}[scale=.9,node distance=2cm]
		\path[use as bounding box] (0, 4.6) rectangle (10, 6.5);
		\draw[] (0,4.75) rectangle ++(1.5,0.5);
		
		\filldraw[] (10,4.75) rectangle ++(-1.5,0.5);
		
		\draw[thick] (0,5.3) -- (0,4.7);
		\draw[thick] (10,5.3) -- (10,4.7);
		\node[] at (0,4.4) {$1$};
		\node[] at (10,4.4) {$2$};
		
		\filldraw[] (5,4.75) rectangle ++(-1.55,0.5);
		\draw[] (5,4.75) rectangle ++(1.55,0.5);
		
		\filldraw[] (3.05,4.8) rectangle ++(-.4,0.4);
		\draw[] (3.05,4.8) rectangle ++(.4,0.4);
		
		\filldraw[] (6.95,4.8) rectangle ++(-.4,0.4);
		\draw[] (6.95,4.8) rectangle ++(.4,0.4);
		
		\filldraw[] (2.45,4.85) rectangle ++(-.2,0.3);
		\draw[] (2.45,4.85) rectangle ++(.2,0.3);
		
		\filldraw[] (7.55,4.85) rectangle ++(-.2,0.3);
		\draw[] (7.55,4.85) rectangle ++(.2,0.3);
		
		\filldraw[] (2.15,4.9) rectangle ++(-.1,0.2);
		\draw[] (2.15,4.9) rectangle ++(.1,0.2);
		
		\filldraw[] (7.85,4.9) rectangle ++(-.1,0.2);
		\draw[] (7.85,4.9) rectangle ++(.1,0.2);
		
		\draw[dotted] (1.5, 5) -- (2.05, 5);
		\draw[dotted] (7.95, 5) -- (8.5, 5);
		
		\draw[thick] (1.5,5.4) -- (1.5,4.6);
		\draw[thick] (8.5,5.4) -- (8.5,4.6);
		\node[] at (1.5,4.3) {$1.25$};
		\node[] at (8.5,4.3) {$1.75$};
		\node[] at (1.5,5.7) {$b_1$};
		\node[] at (8.5,5.7) {$b_2$};
		\draw[thick] (5,5.4) -- (5,4.6);
		\node[] at (5,4.3) {$1.5$};
		
		\draw[] [decorate, decoration = {calligraphic brace}] (3.45,5.5) -- (5,5.5);
		\node[] at (4.225,5.9) {$\frac{1}{8}$};
		
		\draw[] [decorate, decoration = {calligraphic brace}] (2.65,5.5) -- (3.45,5.5);
		\node[] at (3.05,5.9) {$\frac{1}{16}$};
		
		\draw[] [decorate, decoration = {calligraphic brace}] (2.25,5.5) -- (2.65,5.5);
		\node[] at (2.45,5.9) {$\frac{1}{32}$};
		
		\draw[] [decorate, decoration = {calligraphic brace}] (1.5,6.2) -- (5,6.2);
		\node[] at (3.25,6.55) {$\frac{1}{4}$};
	\end{tikzpicture}
	\caption{A model for the axiomatization of $P$ over the interval $(1,2)$.}
	\label{fig:modelAxiomSupport}
\end{figure}
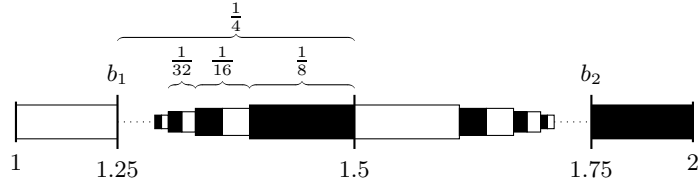
The whole pattern described on the interval $(1,2)$ can be directly transposed onto the interval $(2,3)$ with an exact offset of $+1$. Similarly, the distribution of $S$ over the interval $(0,3)$ can be transposed onto every interval $(3k,3k+3)$ with an offset of $+3k$, for $k>0$.
The only real values for which we do not describe their relation with $S$ are the points surrounded by an interval included in $S$ on one side, and an interval disjoint from $S$ on the other side. These points never conflict with the axiomatization $\paxiom$ which only deals with non-empty open intervals.

By construction, $S$ satisfies each axiom of the formula $\paxiom$, and is therefore a model of this formula. \qed
\end{proof}

\subsection{Encoding a Configuration of the Turing Machine}

Now that the supporting grid has been properly defined, the actual encoding of a given configuration can be addressed. That is, the state, the tape content and the head position of the $(k+1)$-th configuration of a run are encoded on the supporting points contained within the interval $[3k, 3k+3)$.

\subsubsection{Encoding the State}

Encoding the state of a given configuration is rather direct since we defined the grid to contain $N$ consecutive supporting points within every interval $[3k, 3k+1]$ for $k \in \mathbb{N}$, that can support the encoding of a state.
We only need to indicate that we start reading the encoding on a multiple of $3$. However the logic \rdluup does not allow to express periodicity constraints on variables.
Nevertheless, thanks to our axiomatization, $0$ and every other positive multiple of $3$ are the only points that simultaneously have no supporting predecessor, while admitting a supporting successor. These properties are expressible as follows: \\
$\mbox{\hspace{8pt}} \leftcvg(x) = \forall z.\, (z<x \land \rel(z)) \Rightarrow \exists y.\, z<y<x \land \rel(y)$ \\
$\mbox{\hspace{8pt}} \rsucc(x) = \exists z.\, x<z \land \rel(z) \land \forall y.\, x<y<z \Rightarrow \lnot \rel(y)$\\
For convenience, we introduce the formula \encodbeg to characterize a real value $x$ on which the encoding of a state starts: \\
$\mbox{\hspace{8pt}} \encodbeg(x) = \rel(x) \land \leftcvg(x) \land \rsucc(x)$ \\
Furthermore, the formula $\state{q}$ expresses that a state $q \in Q$ is encoded on a given real number $x$ and its $N-1$ supporting successors:
\begin{align*}
	\state{q}(x) = &\ \encodbeg(x) \land \exists y_1, \dots y_{N}.\, x = y_1 \\
 &  \quad \land  \bigwedge_{1\leq i<N} \textit{Succ}_{\textit{Sup}}(y_{i+1}, y_i) \land \bigwedge_{1\leq i \leq N} P(y_i) = b^q_i
\end{align*}
where $P(y_i) = b^q_i$ is a shorthand for $P(y_i)$ if $b^q_i = \top$, and $\lnot P(y_i)$ if $b^q_i = \bot$.

\subsubsection{Encoding the Head Position}
The position of the head is encoded in the second part of the grid, that is, in
the interval $(3k+1, 3k+2)$ for the $(k+1)$-th configuration
(cf. Figure~\ref{fig:DepictLinOrder}).  The grid on this interval is
$<$-isomorphic to $\mathbb{Z}$.  Each element of this subordering will
correspond to a position of the tape. When the predicate $P$ is true at
such a point, it means that the head points towards that cell. Since the Turing
machines that we consider here have a single read/write head, it must point
towards a unique cell for each configuration. Therefore $P$ must be true only
for a single element of that subordering.

\subsubsection{Encoding the Tape Content}
Similarly, the tape content is encoded in the third part of the grid, that is,
in the interval $(3k+2, 3k+3)$ for the $(k+1)$-th configuration
(cf. Figure~\ref{fig:DepictLinOrder}).  Again, the grid on this interval is
$<$-isomorphic to $\mathbb{Z}$. And again, each element $x$ of this subordering
will correspond to a cell of the tape, matching the cell that corresponds to $x-1$
in the head position interval. Figure~\ref{fig:schemeSuccConfig} illustrates the
connections between the suborderings, within a single configuration and with the
next one. The idea of the encoding is to simply set the value of $P$ to true
on the elements of the subordering that correspond to cells containing a $1$,
and to false for cells containing a $0$.

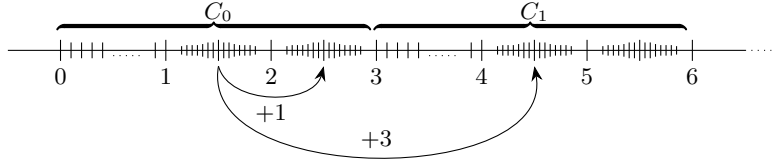
\begin{figure}[h]
	\centering
		\begin{tikzpicture}[scale=.7]
			\path[use as bounding box] (-1, -1.6) rectangle (13.5, .8);
			\draw (-1,0) -- (13,0);
			\draw[dotted] (13,0) -- (13.5,0);
			
			\draw (0,-.2) -- (0,.2);
			\draw (2,-.2) -- (2,.2);
			\draw (4,-.2) -- (4,.2);
			\draw (6,-.2) -- (6,.2);
			\draw (8,-.2) -- (8,.2);
			\draw (10,-.2) -- (10,.2);
			\draw (12,-.2) -- (12,.2);
			
			\node at (0,-.5) {$0$};
			\node at (2,-.5) {$1$};
			\node at (4,-.5) {$2$};
			\node at (6,-.5) {$3$};
			\node at (8,-.5) {$4$};
			\node at (10,-.5) {$5$};
			\node at (12,-.5) {$6$};
			
			\draw[ultra thick] [decorate, decoration = {calligraphic brace}] (-0.06,.4) -- (5.88,.4);
			\node[] at (3,.73) {$C_0$};
			\draw[ultra thick] [decorate, decoration = {calligraphic brace}] (5.96,.4) -- (11.88,.4);
			\node[] at (9,.73) {$C_1$};
			
			\draw (0.2,0.15) -- (0.2,-0.15);
			\draw (0.4,0.15) -- (0.4,-0.15);
			\draw (0.6,0.15) -- (0.6,-0.15);
			\draw (0.8,0.15) -- (0.8,-0.15);
			\draw[dotted] (1, -.1) -- (1.6, -.1);
			\draw (1.8,0.15) -- (1.8,-0.15);
			
			\draw (6.2,0.15) -- (6.2,-0.15);
			\draw (6.4,0.15) -- (6.4,-0.15);
			\draw (6.6,0.15) -- (6.6,-0.15);
			\draw (6.8,0.15) -- (6.8,-0.15);
			\draw[dotted] (7, -.1) -- (7.6, -.1);
			\draw (7.8,0.15) -- (7.8,-0.15);
			
			\draw (3,0.2) -- (3,-0.2);
			\draw (3.1,0.15) -- (3.1,-0.15);
			\draw (2.9,0.15) -- (2.9,-0.15);
			\draw (3.2,0.15) -- (3.2,-0.15);
			\draw (2.8,0.15) -- (2.8,-0.15);
			\draw (3.3,0.12) -- (3.3,-0.12);
			\draw (2.7,0.12) -- (2.7,-0.12);
			\draw (3.4,0.1) -- (3.4,-0.1);
			\draw (2.6,0.1) -- (2.6,-0.1);
			\draw (3.5,0.1) -- (3.5,-0.1);
			\draw (2.5,0.1) -- (2.5,-0.1);
			\draw (3.6,0.1) -- (3.6,-0.1);
			\draw (2.4,0.1) -- (2.4,-0.1);
			\draw (3.7,0.08) -- (3.7,-0.08);
			\draw (2.3,0.08) -- (2.3,-0.08);
			
			\draw (5,0.2) -- (5,-0.2);
			\draw (5.1,0.15) -- (5.1,-0.15);
			\draw (4.9,0.15) -- (4.9,-0.15);
			\draw (5.2,0.15) -- (5.2,-0.15);
			\draw (4.8,0.15) -- (4.8,-0.15);
			\draw (5.3,0.12) -- (5.3,-0.12);
			\draw (4.7,0.12) -- (4.7,-0.12);
			\draw (5.4,0.1) -- (5.4,-0.1);
			\draw (4.6,0.1) -- (4.6,-0.1);
			\draw (5.5,0.1) -- (5.5,-0.1);
			\draw (4.5,0.1) -- (4.5,-0.1);
			\draw (5.6,0.1) -- (5.6,-0.1);
			\draw (4.4,0.1) -- (4.4,-0.1);
			\draw (5.7,0.08) -- (5.7,-0.08);
			\draw (4.3,0.08) -- (4.3,-0.08);
			
			\draw (9,0.2) -- (9,-0.2);
			\draw (9.1,0.15) -- (9.1,-0.15);
			\draw (8.9,0.15) -- (8.9,-0.15);
			\draw (9.2,0.15) -- (9.2,-0.15);
			\draw (8.8,0.15) -- (8.8,-0.15);
			\draw (9.3,0.12) -- (9.3,-0.12);
			\draw (8.7,0.12) -- (8.7,-0.12);
			\draw (9.4,0.1) -- (9.4,-0.1);
			\draw (8.6,0.1) -- (8.6,-0.1);
			\draw (9.5,0.1) -- (9.5,-0.1);
			\draw (8.5,0.1) -- (8.5,-0.1);
			\draw (9.6,0.1) -- (9.6,-0.1);
			\draw (8.4,0.1) -- (8.4,-0.1);
			\draw (9.7,0.08) -- (9.7,-0.08);
			\draw (8.3,0.08) -- (8.3,-0.08);
			
			\draw (11,0.2) -- (11,-0.2);
			\draw (11.1,0.15) -- (11.1,-0.15);
			\draw (10.9,0.15) -- (10.9,-0.15);
			\draw (11.2,0.15) -- (11.2,-0.15);
			\draw (10.8,0.15) -- (10.8,-0.15);
			\draw (11.3,0.12) -- (11.3,-0.12);
			\draw (10.7,0.12) -- (10.7,-0.12);
			\draw (11.4,0.1) -- (11.4,-0.1);
			\draw (10.6,0.1) -- (10.6,-0.1);
			\draw (11.5,0.1) -- (11.5,-0.1);
			\draw (10.5,0.1) -- (10.5,-0.1);
			\draw (11.6,0.1) -- (11.6,-0.1);
			\draw (10.4,0.1) -- (10.4,-0.1);
			\draw (11.7,0.08) -- (11.7,-0.08);
			\draw (10.3,0.08) -- (10.3,-0.08);
			
			\draw[-{Stealth[length=2mm]}] (3,-.3) to [bend right=75] node[below] {$+1$} (5,-.3);
			\draw[-{Stealth[length=2mm]}] (3,-.3) to [bend right=100] node[above] {$+3$} (9,-.3);
		\end{tikzpicture}
	\caption{The first two consecutive configuration encodings.}
	\label{fig:schemeSuccConfig}
\end{figure}

\subsection{Enforcing a Valid Run}
Let us now define formally the formulas characterizing an accepting run of $\mathcal{M}$. We will decompose the global formula into three main parts: the initial conditions $\startformula$, the conditions on the transitions $\stepformula$ and the halting condition $\terminationformula$. For the sake of clarity, we use capital letters for these higher-level formulas. \\
The initial conditions of $\mathcal{M}$ are that the state encoded on $\mathbf{0}$ and its $N-1$ supporting successors is the initial state $q_0$, that the head points towards a unique initial unspecified cell of the tape, and finally that the tape is initially filled with $0$'s.
These conditions are expressed by the following formula:
\begin{align*}
	\startformula = \state{q_0}(\mathbf{0})\ \land\ & \big[\exists y.\, \mathbf{1}<y<\mathbf{2} \land \rel(y) \land P(y) \\
	& \quad \land \forall x.\, (\mathbf{1}<x<\mathbf{2} \land \rel(x) \land P(x)) \Rightarrow x=y\big] \\
	\land\ & \big[\forall y.\, (\mathbf{2}<y<\mathbf{3} \land \rel(y)) \Rightarrow \lnot P(y)\big]
\end{align*}

 The requirements on the transition are more complex. 
Intuitively, if before reaching the step $i \in \mathbb{N}$, we have not yet encountered the halting state $q_F$, then we must ensure that the configuration at Step $i$ can be obtained from the configuration at the previous step $i-1$ by following a transition \mbox{$(q, \alpha, q', \alpha', \lambda) \in \Delta$}.
The overall formula for this condition is the following:
\begin{align*}
	\stepformula = \forall y.\, (& y>0 \land \encodbeg(y) \land \textit{NotEnded}_\mathcal{M}(y)) \\
	& \Rightarrow \exists x.\, y = x + 3 \land \textit{Transition}_\mathcal{M}(x, y)
\end{align*}

The subformula $\textit{NotEnded}_\mathcal{M}(y)$ expresses that no valid real value prior to $y$ (i.e., a positive multiple of $3$ strictly smaller than $y$) encodes the halting state. This formula is defined by:
\begin{align*}
	\textit{NotEnded}_\mathcal{M}(y) = \forall x.\, (x<y \land \encodbeg(x)) \Rightarrow \lnot (\textit{State}_{q_F}(x))
\end{align*}

The subformula $\textit{Transition}_{\mathcal{M}}(x,y)$ expresses that there exists a transition $(q, \alpha, q', \alpha', \lambda) \in \Delta$ that allows to move in one step from the configuration encoded at $x$ (i.e., that the encoding of the configuration starts exactly on $x$), to the configuration corresponding to $y$. 
To improve readability, we decompose the condition on the transition relation as follows:
\begin{align*}
	&\textit{Transition}_\mathcal{M}(x, y) =\\
	&\qquad \bigvee_{(q, \alpha, q', \alpha', \lambda) \in \Delta}  \Big[  \state{q}(x) \land \state{q'}(y) \land \textit{Tape}_{\alpha, \alpha'}(x, y) \land \textit{Head}_{\lambda}(x, y) \Big]   
\end{align*}

For a given transition $(q, \alpha, q', \alpha', \lambda) \in \Delta$, the conditions on the states, tape and head are expressed as follows:

\begin{itemize}
	\item
	The state $q$ must be encoded on the real value $x$, and the state $q'$ on $y$: $\state{q}(x) \land \state{q'}(y)$ 

	\item
	The tape must contain $\alpha \in \{0,1\}$ at the position of the head for the
	step corresponding to $x$. Additionally, for the step corresponding to $y$, the tape must contain $\alpha'$ at
	the previous position of the head, and remain unchanged at all other positions.
	\begin{align*}
		&\textit{Tape}_{\alpha, \alpha'}(x, y) = \big[\forall z.\, (x+1<z<x+2 \land \rel(z) \land P(z)) \\
		&\mbox{\hspace{180pt}} \Rightarrow P(z+1) = \alpha \land P(z+4) = \alpha'\big] \\
		&\land \! \big[\forall z.\, (x+1<z<x+2 \land \rel(z) \land \lnot P(z)) \Rightarrow (P(z+1) \Leftrightarrow P(z+4))\big]
	\end{align*}

	Where $P(z+k) = \alpha$ is a shorthand for $\exists u.\, u = z+k \land P(u)$ if $\alpha = 1$, and $\exists u.\, u = z+k \land \lnot P(u)$ if $\alpha = 0$. The $``+1"$ operator allows us to connect the encoding of the head position with the encoding of the tape content within the same configuration. The $``+4"$ operator does the same while jumping to the next configuration (cf. Figure 4).
	Notice that this formula does not involve $y$; it assumes (rightfully, given the formula $\stepformula$) that the equality $y = x+3$ holds.
	
	\item
	The head is moved in the direction specified by $\lambda \in \{L, R\}$, i.e., left for $L$ and right for $R$. This can be expressed by exploiting the predecessor and successor relations defined for supporting real values.
	\begin{align*}
		\textit{Head}_{\lambda}(x, y) = \, \forall z.\, &(x+1<z<x+2 \land \rel(z) \land P(z)) \\
		& \quad \Rightarrow \exists v.\, f_\lambda(v, z+3) \land P(v) \land \lnot P(z)
	\end{align*}
	where $f_R = \onesucc$ and $f_L = \onepred$.
	Since in the initial configuration of the Turing machine the head points towards a single cell, the formula $\textit{Head}_{\lambda}$ ensures that this remains the case throughout every run of the Turing machine.	
\end{itemize}

Finally, the existence of a halting run is expressed by the formula:
\begin{align*}
	\textit{END}_{\mathcal{M}} = \exists x.\, \state{q_F}(x)
\end{align*}

The global formula that expresses that the Turing machine $\mathcal{M}$
halts on some run encoded by the value of the predicate $P$ is the following:
\begin{align*}
	\textit{HALT}_{\mathcal{M}} = \textit{START}_{\mathcal{M}} \land \textit{STEP}_{\mathcal{M}} \land \textit{END}_{\mathcal{M}} \land \paxiom
\end{align*}

\noindent where $\paxiom$ is the axiomatization of the supporting points as described in Section~\ref{subsec:support}. 

By construction, satisfiability of the global formula $\haltformula$ is equivalent to the existence of a halting run for the Turing machine $\mathcal{M}$. It follows that the satisfiability problem for \rdluup is undecidable, which proves Theorem~\ref{thm:undec}.

\section{Conclusion}
\label{sec:conclu}

This work provides a lower and an upper bound for the decidability of first-order
fragments with quantifiers mixing uninterpreted unary predicates and weak forms
of real arithmetic.  This draws a precise picture of the
frontier of decidability in fragments mixing real arithmetic and uninterpreted
predicates.

We proved the decidability of the fragment \lmix, where uninterpreted unary
predicates, order constraints between real and integer variables, and
difference logic constraints between integer variables are allowed.  This result
is a consequence of the already established decidability of its restriction
\msoor, where only uninterpreted unary predicates and order constraints between
real values are allowed. To the best of our knowledge, there does not exist yet 
a practical decision procedure for \msoor.
It would be interesting to investigate which arithmetic extensions of this decidable fragment preserve decidability. Note however that our proof of decidability relies on the translation of the fragment constraints into the first-order theory of order over $\mathbb{R}$, with unary predicates, which is not directly feasible for, e.g., constraints of the form $x+y \bowtie 0$, where $x$ and $y$ are variables, and $\bowtie \ \in \{<, \leq, =, \geq, >\}$.

In another result, we established the undecidability of the fragment \rdluup, where
uninterpreted unary predicates and difference logic constraints between real
variables are allowed. It is worth mentioning that this result can be adapted straightforwardly to the same logic interpreted over the domain $\mathbb{Q}$.

Our long term goal is to design an effective decision procedure for the decidable fragment.  Complexity results have been established~\cite{rabinovich2012temporal,cristau2009automata,reynolds2010complexity} for the temporal logic counterpart of the theory of order, to which we reduce the decidability of our fragment of interest. We are currently designing a decision procedure relying on the concept of automata on linear orderings introduced in~\cite{bruyere2007automata}.
We hope that the insight we obtained through this decision procedure will eventually guide the design of new powerful instantiation techniques for SMT in a more expressive context, and that these techniques will happen to be complete in particular for this decidable fragment.

\vspace*{5pt}\noindent
\textit{Acknowledgments:}  We are thankful to Tanja Schindler and the reviewers of 
this paper and of our previous work-in-progress workshop paper for their comments.


\bibliographystyle{splncs04}
\bibliography{refs}

\clearpage\section*{Appendix}
\label{app:A}

In order to formally prove Lemma~\ref{lemma:consistency}, we provide a complete description of a model of the formula $\paxiom$. 

\begin{proof}
	In order to prove that the axiomatization is consistent, let us construct a model for $\paxiom$.
	Let $S$ be a subset of $\mathbb{R}$ such that:
	\begin{itemize}[$\bullet$]
		\item All strictly negative numbers belong to $S$:	$(-\infty, 0) \subseteq S$.
		
		\item The interval $[0,1]$ is cut into $2N-2$ intervals of equal length, that are either included or disjoint from $S$:
		\begin{itemize}[-]	
			\item $(\frac{i}{N-1}, \frac{2i+1}{2(N-1)}) \subseteq \mathbb{R} \setminus S$ for all $i \in [0,N-2]$,
			\item $(\frac{2i+1}{2(N-1)}, \frac{i+1}{N-1}) \subseteq S$ for all $i \in [0,N-2]$.
		\end{itemize}
		This distribution ensures the existence of exactly $N-1$ supporting points ($0$ is the first one, and $1$ will be considered later). These $N-1$ supporting points are referred to as $s_1, s_2, \dots s_{N-1}$, and they are depicted in Figure~\ref{fig:model01}. Recall that the supporting points are exactly those surrounded by an interval of $S$ (i.e., black on the figure) on the left, and an interval of the complement of $S$ (i.e., white) on the right.
		
		\item For the interval $[1,2]$, we first ensure that $1$ is also a supporting point by making the open interval $(1, 1+\frac{1}{4})$ on its right belong to $\mathbb{R} \setminus S$. We also make the interval $(2-\frac{1}{4},2)$ belong to $S$, to deal with the left part of the requirement of $2$ being a supporting point. 
		
		Furthermore, in order to generate the desired set of supporting points within $(1,2)$ (that must be \mbox{$<$-isomorphic} to $\mathbb{Z}$), we partition the interval  $(1 + \frac{1}{4}, 2 - \frac{1}{4})$ into a bi-infinite sequence of disjoint open subintervals that alternate between being included and disjoint from $S$, as depicted in Figure~\ref{fig:modelAxiomSupport}: 
		\begin{itemize}
			\item $(1 , 1 + \frac{1}{4}) \subseteq \mathbb{R} \setminus S$, and $(2 - \frac{1}{4}, 2) \subseteq S$,
			\item $(1 + \frac{3}{8}, 1 + \frac{1}{2}) \subseteq S$ and $(1 + \frac{1}{2}, 1 + \frac{5}{8}) \subseteq \mathbb{R} \setminus S$,
			\item $(1 + \frac{1}{2} - \sum\limits_{i=3}^{j}\frac{1}{2^i}, 1 + \frac{1}{2} - \sum\limits_{i=3}^{j}\frac{1}{2^i} + \frac{1}{2^{j+2}}) \subseteq S$ for all $j \geq 4$,
			\item $(1 + \frac{1}{2} - \sum\limits_{i=3}^{j}\frac{1}{2^i} + \frac{1}{2^{j+2}}, 1 + \frac{1}{2} - \sum\limits_{i=3}^{j-1}\frac{1}{2^i}) \subseteq \mathbb{R} \setminus S$ for all $j \geq 4$,
			\item $(1 + \frac{1}{2} + \sum\limits_{i=3}^{j}\frac{1}{2^i}, 1 + \frac{1}{2} + \sum\limits_{i=3}^{j}\frac{1}{2^i} + \frac{1}{2^{j+2}}) \subseteq S$ for all $j \geq 3$,
			\item $(1 + \frac{1}{2} + \sum\limits_{i=3}^{j}\frac{1}{2^i} + \frac{1}{2^{j+2}}, 1 + \frac{1}{2} + \sum\limits_{i=3}^{j+1}\frac{1}{2^i}) \subseteq \mathbb{R} \setminus S$ for all $j \geq 3$.	
		\end{itemize}
		Notice that the union of all the intervals for which we state their relation to $S$, covers the interval $(1,2)$ entirely apart from isolated points that are surrounded by an interval included in $S$, and an interval disjoint from $S$. Some of them are supporting points (when the interval included in $S$ is on the left, and the interval disjoint from $S$ is on the right), the others are completely irrelevant.
		
		\item The pattern described on the interval $(1,2)$ can be directly transposed onto the interval $(2,3)$, with an exact offset of $+1$:
		\begin{itemize}
			\item $(2 , 2 + \frac{1}{4}) \subseteq \mathbb{R} \setminus S$, and $(3 - \frac{1}{4}, 3) \subseteq S$,
			\item $(2 + \frac{3}{8}, 2 + \frac{1}{2}) \subseteq S$ and $(2 + \frac{1}{2}, 2 + \frac{5}{8}) \subseteq \mathbb{R} \setminus S$,
			\item $(2 + \frac{1}{2} - \sum\limits_{i=3}^{j}\frac{1}{2^i}, 2 + \frac{1}{2} - \sum\limits_{i=3}^{j}\frac{1}{2^i} + \frac{1}{2^{j+2}}) \subseteq S$ for all $j \geq 4$,
			\item $(2 + \frac{1}{2} - \sum\limits_{i=3}^{j}\frac{1}{2^i} + \frac{1}{2^{j+2}}, 2 + \frac{1}{2} - \sum\limits_{i=3}^{j-1}\frac{1}{2^i}) \subseteq \mathbb{R} \setminus S$ for all $j \geq 4$,
			\item $(2 + \frac{1}{2} + \sum\limits_{i=3}^{j}\frac{1}{2^i}, 2 + \frac{1}{2} + \sum\limits_{i=3}^{j}\frac{1}{2^i} + \frac{1}{2^{j+2}}) \subseteq S$ for all $j \geq 3$,
			\item $(2 + \frac{1}{2} + \sum\limits_{i=3}^{j}\frac{1}{2^i} + \frac{1}{2^{j+2}}, 2 + \frac{1}{2} + \sum\limits_{i=3}^{j+1}\frac{1}{2^i}) \subseteq \mathbb{R} \setminus S$ for all $j \geq 3$.	
		\end{itemize}
		
		\item Finally, the distribution of $S$ over the interval $(0,3)$ defined above can be transposed to every interval $(3k,3k+3)$ for $k > 0$:
		\begin{itemize}
			\item $(3k+\frac{i}{N-1}, 3k+\frac{2i+1}{2(N-1)}) \subseteq \mathbb{R} \setminus S$ for all $i \in [0,N-2]$,
			\item $(3k+\frac{2i+1}{2(N-1)}, 3k+\frac{i+1}{N-1}) \subseteq S$ for all $i \in [0,N-2]$.
			\item $(3k + 1 , 3k + \frac{1}{4}) \subseteq \mathbb{R} \setminus S$, and $(3k + 2 - \frac{1}{4}, 3k + 2) \subseteq S$,
			\item $(3k + 1 + \frac{3}{8}, 3k + 1 + \frac{1}{2}) \subseteq S$ and $(3k + 1 + \frac{1}{2}, 3k + 1 + \frac{5}{8}) \subseteq \mathbb{R} \setminus S$,
			\item $(3k + 1 + \frac{1}{2} - \sum\limits_{i=3}^{j}\frac{1}{2^i}, 3k + 1 + \frac{1}{2} - \sum\limits_{i=3}^{j}\frac{1}{2^i} + \frac{1}{2^{j+2}}) \subseteq S$ for all $j \geq 4$,
			\item $(3k + 1 + \frac{1}{2} - \sum\limits_{i=3}^{j}\frac{1}{2^i} + \frac{1}{2^{j+2}}, 3k + 1 + \frac{1}{2} - \sum\limits_{i=3}^{j-1}\frac{1}{2^i}) \subseteq \mathbb{R} \setminus S$ for all $j \geq 4$,
			\item $(3k + 1 + \frac{1}{2} + \sum\limits_{i=3}^{j}\frac{1}{2^i}, 3k + 1 + \frac{1}{2} + \sum\limits_{i=3}^{j}\frac{1}{2^i} + \frac{1}{2^{j+2}}) \subseteq S$ for all $j \geq 3$,
			\item $(3k + 1 + \frac{1}{2} + \sum\limits_{i=3}^{j}\frac{1}{2^i} + \frac{1}{2^{j+2}}, 3k + 1 + \frac{1}{2} + \sum\limits_{i=3}^{j+1}\frac{1}{2^i}) \subseteq \mathbb{R} \setminus S$ for all $j \geq 3$,
			\item $(3k + 2 , 3k + 2 + \frac{1}{4}) \subseteq \mathbb{R} \setminus S$, and $(3k+3  - \frac{1}{4}, 3k+3) \subseteq S$,
			\item $(3k + 2 + \frac{3}{8}, 3k + 2 + \frac{1}{2}) \subseteq S$ and $(3k + 2 + \frac{1}{2}, 3k + 2 + \frac{5}{8}) \subseteq \mathbb{R} \setminus S$,
			\item $(3k + 2 + \frac{1}{2} - \sum\limits_{i=3}^{j}\frac{1}{2^i}, 3k + 2 + \frac{1}{2} - \sum\limits_{i=3}^{j}\frac{1}{2^i} + \frac{1}{2^{j+2}}) \subseteq S$ for all $j \geq 4$,
			\item $(3k + 2 + \frac{1}{2} - \sum\limits_{i=3}^{j}\frac{1}{2^i} + \frac{1}{2^{j+2}}, 3k + 2 + \frac{1}{2} - \sum\limits_{i=3}^{j-1}\frac{1}{2^i}) \subseteq \mathbb{R} \setminus S$ for all $j \geq 4$,
			\item $(3k + 2 + \frac{1}{2} + \sum\limits_{i=3}^{j}\frac{1}{2^i}, 3k + 2 + \frac{1}{2} + \sum\limits_{i=3}^{j}\frac{1}{2^i} + \frac{1}{2^{j+2}}) \subseteq S$ for all $j \geq 3$,
			\item $(3k + 2 + \frac{1}{2} + \sum\limits_{i=3}^{j}\frac{1}{2^i} + \frac{1}{2^{j+2}}, 3k + 2 + \frac{1}{2} + \sum\limits_{i=3}^{j+1}\frac{1}{2^i}) \subseteq \mathbb{R} \setminus S$ for all $j \geq 3$.	
		\end{itemize}
	\end{itemize}
	The only real values for which we do not describe their relation with $S$, are isolated points (some of them constituting the support, some not). They never conflict with the axiomatization $\paxiom$ which only deals with non-empty open intervals.
	
	By construction, $S$ satisfies each axiom of the formula $\paxiom$, and is therefore a model. \qed
	
\end{proof}

\end{document}